\newtheorem{theorem}{Theorem}[section]
\newtheorem{definition}{Definition}[section]
\newtheorem{lemma}[theorem]{Lemma}
\begin{document}
%
\title{Hypergraph Isomorphism Computation}
%
%
%
%

\author{Yifan Feng, Jiashu Han,
        Shihui Ying,~\IEEEmembership{Member,~IEEE} and Yue Gao,~\IEEEmembership{Senior Member,~IEEE}
\IEEEcompsocitemizethanks{
\IEEEcompsocthanksitem Yifan Feng, Jiashu Han, and Yue Gao are with the School of Software, BNRist, THUIBCS, BLBCI, Tsinghua University, Beijing 100084, China. \protect\\
E-mail: evanfeng97@gmail.com; joshuahan1228@gmail.com; gaoyue@tsinghua.edu.cn; 
\IEEEcompsocthanksitem Shihui Ying is with the Department of Mathematics, School of Science, Shanghai University, Shanghai 200444, China. \protect\\
E-mail: shying@shu.edu.cn;
}
}

%
%

\markboth{IEEE TRANSACTIONS ON PATTERN ANALYSIS AND MACHINE INTELLIGENCE}
{Shell \MakeLowercase{\textit{et al.}}: Bare Demo of IEEEtran.cls for Computer Society Journals}
%



\IEEEtitleabstractindextext{%
\begin{abstract}
The isomorphism problem is a fundamental problem in network analysis, which involves capturing both low-order and high-order structural information. In terms of extracting low-order structural information, graph isomorphism algorithms analyze the structural equivalence to reduce the solver space dimension, which demonstrates its power in many applications, such as protein design, chemical pathways, and community detection. 
For the more commonly occurring high-order relationships in real-life scenarios, the problem of hypergraph isomorphism, which effectively captures these high-order structural relationships, cannot be straightforwardly addressed using graph isomorphism methods.
Besides, the existing hypergraph kernel methods may suffer from high memory consumption or inaccurate sub-structure identification, thus yielding sub-optimal performance. 
In this paper, to address the abovementioned problems, we first propose the hypergraph Weisfiler-Lehman test algorithm for the hypergraph isomorphism test problem by generalizing the Weisfiler-Lehman test algorithm from graphs to hypergraphs. 
Secondly, based on the presented algorithm, we propose a general hypergraph Weisfieler-Lehman kernel framework 
and implement two instances, which are Hypergraph Weisfeiler-Lehamn Subtree Kernel (Hypergraph WL Subtree Kernel) and Hypergraph Weisfeiler-Lehamn Hyperedge Kernel (Hypergraph WL Hyperedge Kernel). 
The Hypergraph WL Subtree Kernel counts different types of rooted subtrees and generates the final feature vector for a given hypergraph by comparing the number of different types of rooted subtrees. 
The Hypergraph WL Hyperedge Kernel is developed to process hypergraphs with more degrees of hyperedges, which counts the vertex labels that are connected by each hyperedge to generate the feature vector. 
Mathematically, we prove the proposed Hypergraph WL Subtree Kernel can degenerate into the typical Graph Weisfeiler-Lehman Subtree Kernel when dealing with low-order graph structures.
In order to fulfill our research objectives, a comprehensive set of experiments was meticulously designed 
, including seven graph classification datasets and $12$ hypergraph classification datasets. Results on graph classification datasets indicate that the Hypergraph WL Subtree Kernel can achieve the same performance compared with the classical Graph Weisfeiler-Lehman Subtree Kernel. Results on hypergraph classification datasets show significant improvements compared to other typical kernel-based methods, which demonstrates the effectiveness of the proposed methods. 
In our evaluation, we found that our proposed methods outperform the second-best method in terms of runtime, running over 80 times faster when handling complex hypergraph structures. This significant speed advantage highlights the great potential of our methods in real-world applications.

\end{abstract}

\begin{IEEEkeywords}
Hypergraph, Hypergraph Isomorphism, Hypergraph Computation, High-Order Correlation.
\end{IEEEkeywords}}

\maketitle

\IEEEdisplaynontitleabstractindextext

%
\IEEEpeerreviewmaketitle

\IEEEraisesectionheading{\section{Introduction}\label{sec:introduction}}









\IEEEPARstart{N}{etwork}, as a typical irregular modeling tool, has shown its advances in many applications like social networks\cite{graph_social_network}, brain networks\cite{graph_brain}, collaborative networks\cite{graph_collabrative_network}, knowledge networks\cite{graph_knowledge_network}, chemical pathways\cite{graph_pathways}, and protein structures\cite{graph_protein}. However, real-world irregular data often comprise numerous high-order correlations that a simple graph structure cannot adequately represent. When attempting to model high-order correlations, such as the group relations in social media \cite{hg_social_media, hgnn, hgnnp} or the co-author relations in academic papers \cite{dhgnn, hg_coauthor, hg_encoder}, the intricate correlations among vertices become ambiguous, as illustrated in Figure \ref{fig:intro}. The hyperedge in hypergraphs can link more than two vertices, which endows the hypergraph with the capability to model beyond pair-wise correlation compared with the simple graph. To analyze and understand the hypergraph structure, one needs a tool to measure the similarity between different hypergraphs or sub-structures of a hypergraph, also known as the ``Isomorphism Test''.




Graph isomorphism algorithms, such as the Weisfeiler-Lehman test \cite{graph_wl} for graph isomorphism and graph kernels \cite{graph_wl_subtree, graph_wl_all, graphlet}, excel at extracting low-order structural information. However, they face limitations when dealing with relationships that extend beyond pair-to-pair connections. In the typical Weisfeiler-Lehman test, each vertex gathers its neighbor vertices' labels to generate the compressed vertex labels. Each compressed vertex label corresponds to a unique subtree structure. Nevertheless, the algorithm can only output whether the two graphs are identity. Furthermore, based on the algorithm, the graph kernel methods are proposed to measure the similarity of two graphs with a continuous value. The graph kernel methods design a mapping from the graph structure to the vector in the feature space and utilize the feature vector's inner product to measure the graphs' similarity. However, these graph-based similarity measurement methods, which rely on vertex-to-vertex label propagation, are not suitable for handling the complexity of such high-dimensional structures, as mentioned earlier in the case of hypergraphs.

\begin{figure}
    \centering
    \includegraphics[width=0.45\textwidth]{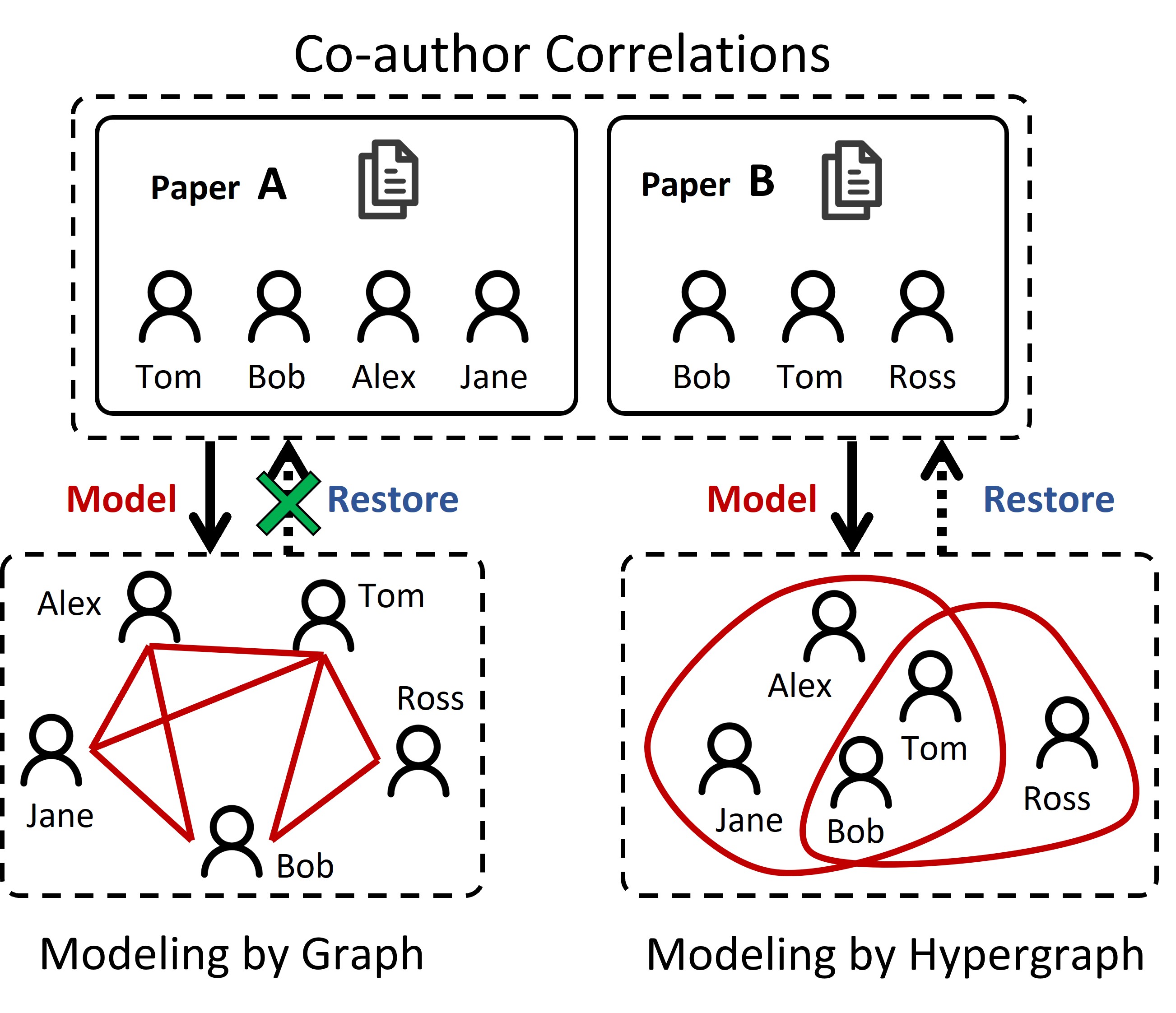}
    \caption{An illustration of modeling the co-author correlations by graph or hypergraph. The line in red color denotes the ``co-author'' relation. Compared with the graph, the hypergraph can accurately model the high-order correlations and restore the original correlations.}
    \label{fig:intro}
\end{figure}

With the emergence of hypergraph structures, researchers have also proposed various strategies \cite{hg_iso, hg_line, hg_root} to address the problem of high-order structure similarity measurement. However, most of these methods fundamentally rely on transforming hypergraphs into graph structures. These indirect hypergraph isomorphism algorithms, to some extent, sacrifice the complex structural information inherent in hypergraphs, leading to misalignments of criteria and cumbersome computational processes.
\cite{hg_root} develop a sampling-based method that draws the vertex-hyperedge sequence from the original hypergraph. However, it will bring information loss and fail in many common cases (the pyramid-like hypergraphs in Figure \ref{fig:pyramid}).
By employing an indirect approach, Bai et al. \cite{hg_line} design a transformation-based method, which transforms the hypergraph structure into a low-order graph structure with three steps. In this way, the scale of the generated graph will be extremely large and can not be employed in practice (causing the out-of-memory error in the dataset containing more than 3000 hypergraphs in Table \ref{tab:hypergraph_real}).

To address those problems, we first extend the Weisfier-Lehman test algorithm \cite{graph_wl} from graphs to hypergraphs. The definition of the neighbor relation is the obstacle to applying the Weisfiler-Lehman test in hypergraphs. Thus, as for the complex high-order correlations, we decompose the neighbor relation into two sub-neighbor relations: vertex's hyperedge neighbor and hyperedge's vertex neighbor. Then, based on the two neighbor relations, we devise a two-stage hypergraph Weisfiler-Lehman test algorithm, which can be directly applied to hypergraphs. Second, based on the hypergraph Weisfeiler-Lehman test algorithm, we proposed a general hypergraph Weisfeiler-Lehman kernel, which maps the hypergraph structure into a vector in the feature space. With the hypergraph Weisfeiler-Lehman kernel, the distance between two hypergraphs can be computed by the inner product of the two corresponding feature vectors. Third, we implement two instances of the general hypergraph Weisfeiler-Lehman kernel: hypergraph Weisfeiler-Lehman subtree kernel and hypergraph Weisfeiler-Lehman hyperedge kernel. The hypergraph Weisfeiler-Lehman subtree kernel directly counts different types of subtree structures, and the hypergraph Weisfeiler-Lehman hyperedge kernel counts different  hyperedges. Besides, to deeply exploit the relation between the graph Weisfeiler-Lehman subtree kernel and the proposed hypergraph Weisfeiler-Lehman subtree kernel, we have proven that the hypergraph Weisfeiler-Lehman subtree kernel can be reduced to the graph Weisfeiler-Lehman subtree kernel when processing the graph structures from a mathematical perspective. To verify the effectiveness of the proposed hypergraph Weisfeiler-Lehman kernels, we conduct experiments on $19$ graph/hypergraph classification datasets, including two synthetic graph datasets, five real-world graph datasets, four synthetic hypergraph datasets, and eight real-world hypergraph datasets. Experimental results on graph datasets demonstrate that the proposed hypergraph Weisfeiler-Lehman subtree kernel can achieve the same performance as the graph Weisfeiler-Lehman subtree kernel (proven by Theorem \ref{theorem:equal}). Experimental results on hypergraph datasets show a superior performance increase compared with other kernel-based methods. We also conduct experiments to compare the runtime with other hypergraph methods, which is important in practice. Results demonstrate that our methods run faster and can achieve better performance. The main contributions of this paper are summarized as follows:

\begin{itemize}
    \item We extend the Weisfeiler-Lehman test from graphs to hypergraphs and propose the hypergraph Weisfeiler-Lehman test algorithm for the hypergraph isomorphism test problem. 
    \item We propose a general hypergraph Weisfeiler-Lehman kernel framework for hypergraph or sub-structure hypergraph similarity measure and implement two instances: hypergraph Weisfeiler-Lehman subtree kernel and hypergraph Weisfeiler-Lehman hyperedge kernel.
    \item We mathematically prove that the proposed hypergraph Weisfeiler-Lehman subtree kernel can degenerate into graph Weisfeiler-Lehman subtree kernel and achieve the same performance confronting graph structures. 
    \item Extensive experiments on $19$ graph/hypergraph classification datasets demonstrate the effectiveness of the proposed methods.
\end{itemize}

\section{Related Work}
In this section, we will introduce some related graph isomorphism methods and hypergraph isomorphism methods. 
\subsection{Graph Isomorphism Methods}

The Weisfeiler-Leman Algorithm, also known as the WL-test, was initially proposed in 1968\cite{graph_wl}. Weisfeiler and Leman made a pioneering contribution by applying the Color Refinement algorithm to the field of graph isomorphism. The fundamental concept behind this algorithm involves labeling the vertices of a graph based on their iterated degree sequence. In 1992, Babai and Mathon\cite{graph_k_wl} extended the Color Refinement process, introducing the k-dimensional Weisfeiler-Leman algorithm (k-WL). In contrast to the two-dimensional version, Babai et al. \cite{graph_k_wl} employed color tuples of vertices instead of single vertex coloring.

The Group-Theoretic Graph Isomorphism Machinery, known as Subgraph Patterns, was proposed by Luks et al.\cite{group_theory} in 1982 to test the isomorphism of graphs with bounded degrees. Luks et al. \cite{group_theory} pioneered a broader problem-solving strategy for graph isomorphism by devising a recursive mechanism that leverages the structure of permutation groups to encode graph structures. This algorithm forms the cornerstone of the group-theoretic graph isomorphism machinery, providing a solid foundation for further developments in the field. One of the most representative methods in this domain is the Graphlet method proposed by Prˇzulj et al. \cite{motif_graphlet}. Graphlets are subgraph patterns, with each graphlet representing an instance of an isomorphism type. Kondor demonstrated that vectorizing the statistical frequency of all kernel occurrences can effectively embed the structural features of graphs into a feature space.

The concept of Graph Kernel was first introduced in \cite{graph_iso_np_hard, graph_kernel_two} in 2003. Jan Ramon and Thomas Gartner et al. \cite{efficiency_graph_kernel} pioneered the Subtree Graph Kernel in 2003, which marked the emergence of subtree-based isomorphism algorithms. Jan Ramon rigorously derived and computed to demonstrate that encoding subtree structures can significantly enhance expressive power while incurring a minimal computational cost. In the realm of Graph Kernel, there are three main directions of research. Firstly, the Shortest-path graph Kernel, which is the earliest proposed graph kernel isomorphism algorithm, randomly extracts subgraph structural information through random walks. During the same period, the Weisfeiler-Lehman Subtree Kernel achieved remarkable expressive power improvement by employing the WL-test with minimal computational cost. Thirdly, the Graphlet Kernel algorithm proposed a solution to the incompleteness problem of the WL-test. It statistically counts the number of kernels in the graph through random sampling, thereby encoding the structural information of the graph data. Notably, Shervashidze's groundbreaking research in 2009 \cite{graphlet} significantly improved this method by approximating the occurrence probabilities of Graphlets through random sampling, enhancing computational efficiency. Although this approach reduced memory consumption in practical applications, it compromised the reliability and stability of the algorithm due to the random sampling technique employed.

\subsection{Hypergraph Isomorphism Methods}
The development of hypergraph-based isomorphism testing has seen vigorous growth in recent years, coinciding with the rise of hypergraph structures. As early as 2007, Wachman et al. \cite{hg_root} experimented with isomorphism algorithms on hypergraphs. Due to the inherent unreliability of random walk algorithms, the kernel function suffered from high complexity and unstable learning structures. In 2008, Laszl'o et al.\cite{hg_low_rank} investigated the computational aspects of hypergraph isomorphism and proposed algorithms that can handle hypergraphs with low-rank structures efficiently within a reasonable computational timeframe. Arvind et al. \cite{CHI} presented a fixed parameter tractable algorithm for Colored Hypergraph Isomorphism. Arvind claimed their algorithm could be seen as a generalization of Luks’s result for Hypergraph Isomorphism. In 2014, Bai et al.\cite{hg_line} tackled this issue by transforming the hypergraph into a bipartite graph, enabling the application of the WL-test to hypergraph structures. However, this transformation significantly increased the size of the hypergraph, making the algorithm excessively complex. The low computational efficiency prevented the direct application of the algorithm to hypergraph data. As stated by the authors in its Conclusion \cite{hg_line}, "Our future work is to develop a new higher-order WL isomorphism test algorithm that can be directly performed on a hypergraph." The field of hypergraph isomorphism requires a WL isomorphism algorithm that can directly operate on hypergraphs, which are the methods proposed in our paper.


\section{Preliminary}
In this section, we first introduce the problem background of the graph/hypergraph isomorphism. Then, we briefly review the most related graph Weisfeiler-Lehman kernel. The detailed descriptions of notations are in Table \ref{tab:nots}.

\begin{table}
    \centering
    \label{tab:nots}
    \caption{Notations and associated descriptions in this paper.}
    \begin{tabular}{cc}
    \toprule
    Notations & Descriptions \\
    \midrule
    $G$ & A graph \\
    $V$ & Vertex set of a graph \\
    $E$ & Edge set of a graph \\
    $\ell(v)$ & Label of vertex $v$ \\
    $M_\cdot(v)$ & Multiset with respect to vertex $v$ \\
    $\mathcal{N}(v)$ & Set of neighbor vertices of vertex $v$ \\
    \midrule
    $\mathcal{G}$ & A hypergraph \\
    $\mathcal{V}$ & Vertex set of a hypergraph \\
    $\mathcal{E}$ & Hyperedge set of a hypergraph \\
    $\mathbf{H}$ & Hypergraph incidence matrix \\
    $\ell^v(v)$ & Label of vertex $v$ \\
    $\ell^e(v)$ & Label of hyperedge $e$ \\
    $\mathcal{N}_e(v)$ & Set of vertex's hyperedge neighbors of vertex $v$ \\
    $\mathcal{N}_v(e)$ & Set of hyperedge's vertex neighbors of hyperedge $e$ \\
    $M^e_\cdot(v)$ & Multiset of hyperedges with respect to vertex $v$ \\
    $M^v_\cdot(e)$ & Multiset of vertices with respect to hyperedge $e$ \\
    \midrule
    $h$ & Number of iterations \\
    $N$ & Number of hypergraphs \\
    $\phi(\cdot)$ & Feature map \\
    $c(X, \sigma)$ & Number of sub-structure $\sigma$ in structure $X$ \\
    $k_\cdot(x, x')$ & Kernel function with respect to $x$ and $x'$ \\
    \bottomrule
    \end{tabular}
\end{table}

\subsection{Graph/Hypergraph Isomorphism}

\textbf{Graph Isomorphism:} 
Given two graphs $G = \{V, E\}$ and $G' = \{V', E'\}$, the target of the graph isomorphism test (denoted by $G \cong G'$) is to find whether a bijective mapping $g := V \rightarrow V'$ exists. The mapping $g$ is called the isomorphism function, such that 
\begin{equation}
\nonumber
    (v_i, v_j) \in E \iff (g(v_i), g(v_j)) \in E' .
\end{equation}

However, the complete graph isomorphism test solution has been proven to be the NP-hard problem \cite{graph_iso_np_hard}, thus making it practically infeasible. The Weisfeiler-Lehman test \cite{graph_wl}, also known as ``naive vertex refinement'', is a typical approximate solution of the graph isomorphism test, as described in Algorithm \ref{alg:g_wl}. It starts with two input graphs $G$ and $G'$ associated with labeled vertices. Each vertice gathers its neighbor vertices' labels in each iteration to build a subtree string, which is used to relabel the vertex. For each iteration, if the statics of different types of vertex labels of the two graphs are different, the algorithm is then determined, and the two graphs are not isomorphism graphs. Otherwise, the two graphs may be isomorphisms to each other.

\begin{algorithm}[!htbp]
	\renewcommand{\algorithmicrequire}{\textbf{Input:}}
	\renewcommand{\algorithmicensure}{\textbf{Output:}}
	\caption{Weisfeiler-Lehman test of graph isomorphism.}
	\label{alg:g_wl}
	\begin{algorithmic}
		\Require Graph ${G}=\{{V}, {E} \}$ and ${G'} = \{ {V}', {E}' \}$, vertex $v \in {V}$, and $v' \in {V}'$, vertex label map: $\ell := v/v' \rightarrow c$, hash function $f := s \rightarrow c$, $\cdot / \cdot$ operation denotes the `` or ''.
            \State \emph{1. Multiset-label determination.}
            \begin{itemize}
                \item For $i=0$, set $M_i(v) := l_0(v)=\ell(v)$.
                \item For $i>0$, assign a multiset-label $M_i(v) = \{ l_{i-1}(u) \mid u \in \mathcal{N}(v) \}$ to each node $v$ in $G$ and $G'$.
            \end{itemize}
            \State \emph{2. Sorting each multiset.}
            \begin{itemize}
                \item Sort elements in $M_i(v)$ and transfer to string $s_i(v)$. 
                \item Add $l_{i-1}(v)$ as prefix to $s_i(v)$.
            \end{itemize}
            \State \emph{3. Label compression.}
            \begin{itemize}
                \item Sort all of the string $s_i(v)$ for all $v$ in $G$ and $G'$.
                \item Map each string $s_i(v)$ to a new compressed label.
            \end{itemize}
            \State \emph{4. Relabeling.}
            \begin{itemize}
                \item Set $l_i(v) := f(s_i(v))$ for all vertices in $G$ and $G'$.
            \end{itemize}
	\end{algorithmic}  
\end{algorithm}

\textbf{Hypergraph Isomorphism:} Similarly, given two hypergraphs $\mathcal{G} = \{ \mathcal{V}, \mathcal{E} \}$ and $\mathcal{G}' = \{ \mathcal{V}', \mathcal{E}' \}$, the target of the hypergraph isomorphism test  (denoted by $\mathcal{G} \cong \mathcal{G}'$) is to find whether a bijective mapping $g := \mathcal{V} \rightarrow \mathcal{V}'$ exists. The mapping $g$ is called the isomorphism function, such that 
\begin{equation}
\nonumber
    (v_1, v_2, \cdots, v_m) \in \mathcal{E} \iff (g(v_1), g(v_2), \cdots, g(v_m)) \in \mathcal{E}' .
\end{equation}

\subsection{Graph Weisfeiler-Lehman Kernel}
Based on the Weisfeiler-Lehman test of graph isomorphism, the graph Weisfeiler-Lehman subtree kernel \cite{graph_wl_subtree} and graph Weisfeiler-Lehman edge kernel \cite{graph_wl_all} are proposed, respectively. The core motivation behind the two kernel methods is mapping the uncomparable graph structure into a feature vector in feature space, where the distance between two graphs can be calculated by the inner product of the two feature vectors as follows
\begin{equation}
    k(G, G') = \phi(G)^\top \phi(G') .
\end{equation}

\subsubsection{Graph Weisfeiler-Lehman Subtree Kernel}
Due to the vertex label in each iteration corresponding to a unique subtree structure, the simple idea behind the graph Weisfeiler-Lehman subtree kernel is to count the number of different types of labels in the graphs. Thus, the feature function $\phi$ can be defined as
\begin{equation}
\nonumber
    \phi(G) = (c(G, \sigma_{00}), \cdots, c(G, \sigma_{0|\Sigma_0|}), \cdots, c(G, \sigma_{h|\Sigma_h|})) ,
\end{equation}
where $c(G, \sigma)$ is the counting function. $\sigma_{ji}$ is the corresponding label of the $j$-th subtree in $i$-th iteration, and $\Sigma_i$ is the number of different subtree types in $i$-th iteration, with a maximum $h$ numbner of iterations.

\subsubsection{Graph Weisfeiler-Lehman Edge Kernel}
Furthermore, the graph Weisfiler-Lehman edge kernel counts matching pairs of edges with identically labeled endpoints, which can be formulated as
\begin{equation}
\nonumber
    \phi(G) = (c(G, \sigma_{0k_0} \rightarrow \sigma_{0k_0'}), \cdots, c(G, \sigma_{hk_h} \rightarrow \sigma_{hk'_h})) ,
\end{equation}
where $k_i, k'_i \in [1, |\Sigma_i|]$. $\sigma_{ik_i} \rightarrow \sigma_{ik_i'}$ denotes an edge with two labeled vertices: $\sigma_{ik_i}$ and $\sigma_{ik'_i}$. 

\section{Methods}
In this section, we first introduce the Weisfeiler-Lehman test generalized for hypergraph, named hypergraph Weisfeiler-Lehman test. Then, we define the hypergraph Weifeiler-Lehman sequence and general hypergraph kernel based on them. In the following, we present two instances of this kernel: the hypergraph Weisfeiler-Lehman subtree kernel and the hypergraph Weisfeiler-Lehman hyperedge kernel. Finally, we compare our method to Graph Weisfeiler-Lehman Kernel from a mathematical perspective. 

\begin{figure}
    \centering
    \includegraphics[width=0.42\textwidth]{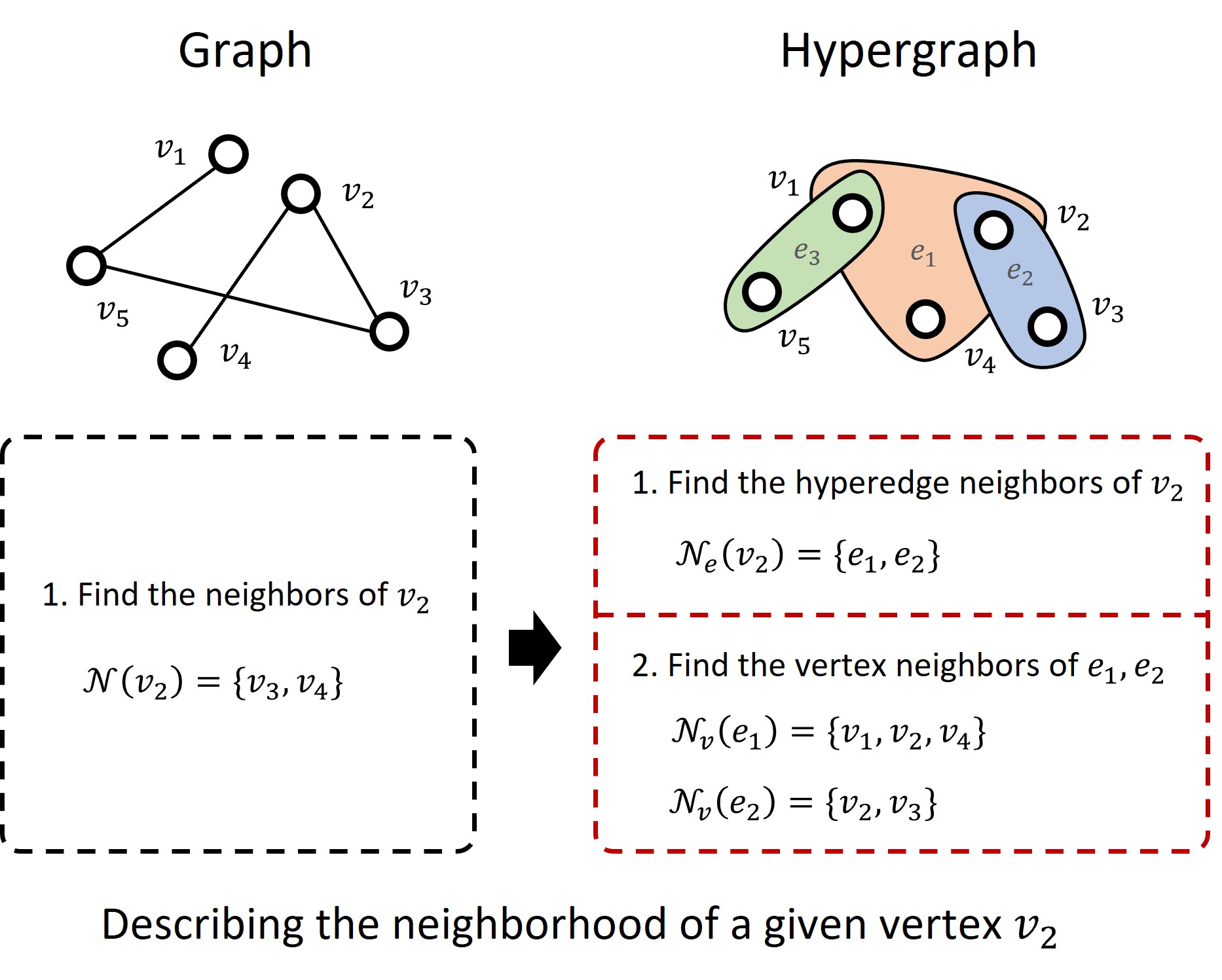}
    \caption{Illustration of generalizing the definition of neighbor from graphs to hypergraphs.}
    \label{fig:neighbor}
\end{figure}

\subsection{Hypergraph Weisfeiler-Lehman Test}
Based on the concept of the Weisfeiler-Lehman test \cite{graph_wl} of isomorphism on graphs, in this subsection, we generalize it into hypergraphs as presented in Algorithm \ref{alg:hg_wl}.

\begin{algorithm}
	\renewcommand{\algorithmicrequire}{\textbf{Input:}}
	\renewcommand{\algorithmicensure}{\textbf{Output:}}
	\caption{The hypergraph Weisfeiler-Lehman test of hypergraph isomorphism algorithm.}
	\label{alg:hg_wl}
	\begin{algorithmic}[1]
		\Require Hypergraph $\mathcal{G}=\{\mathcal{V}, \mathcal{E} \}$ and $\mathcal{G'} = \{ \mathcal{V}', \mathcal{E}' \}$, vertex $v \in \mathcal{V}$, and $v' \in \mathcal{V}'$, hyperedge $e \in \mathcal{E}$ and $e' \in \mathcal{E}'$, vertex label map: $\ell^v := v/v' \rightarrow c$, hyperedge label map $\ell^e := e/e' \rightarrow c$, number of iterations $h$, multiset sorting function $sort(\cdot)$, concatenating elements in multiset and stringify function: $str(\cdot)$, string compression function $f(\cdot)$, concatenating function to concate a new stringify function with the old one: $strcat(\cdot, \cdot) $, $\cdot / \cdot$ operation denotes the `` or ''. 
            \State $l_0^v(v) \gets \ell^v(v)$, $v \in \mathcal{V}$ \Comment Initialize $\mathcal{G}$ vertex labels.
            \State $l_0^v(v') \gets \ell^v(v')$, $v' \in \mathcal{V}'$ \Comment Initialize $\mathcal{G}'$ vertex labels.
            \State $l_0^e(e) \gets \ell^e(e)$, $e \in \mathcal{E}$ \Comment Initialize $\mathcal{G}$ hyperedge labels.
            \State $l_0^e(e') \gets \ell^e(e')$, $e' \in \mathcal{E}'$ \Comment Initialize $\mathcal{G}'$ hyperedge labels.
            \State $i \gets 0$
            \While{$ i < h$ and $\{l_i^v(v) \mid v \in \mathcal{V}\} = \{l_i^v(v') \mid v' \in \mathcal{V}'\}$}
		\State $i \gets i + 1$
		\State \emph{// Each hyperedge gathers its vertex neighbor labels.}
            \For{$e/e' \in \mathcal{E}/\mathcal{E}'$}
            \State \emph{// 1. Gathering hyperedge's vertex neighbor labels into multiset.}
            \State $M^v_i(e) \gets \{l^v_{i-1}(v) \mid v \in \mathcal{N}_v(e) \}$ 
            \State $M^v_i(e') \gets \{l^v_{i-1}(v') \mid v' \in \mathcal{N}_v(e') \}$ 
            \State \emph{// 2. Sorting multiset and stringify.}
            \State $S_e / S_{e'} \gets str(sort(M^v_i(e))) / str(sort(M^v_i(e')))$
            \State \emph{// 3. Hyperedge label compression and Relabeling.}
            \State $S_e / S_{e'} \gets strcat(l_{i-1}^e(e), S_e) / strcat(l_{i-1}^e(e'), S_{e'})$
            \State $l_{i}^e(e) / l_{i}^e(e') \gets f(S_e) / f(S_{e'})$
            \EndFor
            \State \emph{// Each hyperedge gathers its vertex neighbor labels.}
            \For{$v/v' \in \mathcal{V}/\mathcal{V}'$}
            \State \emph{// 1. Gathering vertex's hyperedge neighbor labels into multiset.}
            \State $M^e_i(v) \gets \{l^e_i(e) \mid e \in \mathcal{N}_e(v) \}$ 
            \State $M^e_i(v') \gets \{l^e_i(e') \mid e' \in \mathcal{N}_e(v') \}$ 
            \State \emph{// 2. Sorting multiset.}
            \State $S_v / S_{v'} \gets str(sort(M^e_i(v))) / str(sort(M^e_i(v')))$
            \State \emph{// 3. Vertex label compression and Relabeling.}
            \State $S_v / S_{v'} \gets strcat(l_{i-1}^v(v), S_v) / strcat(l_{i-1}^v(v'), S_{v'})$
            \State $l_{i}^v(v) / l_{i}^v(v') \gets f(S_v) / f(S_{v'})$
            \EndFor
            \EndWhile
            \State Comparing $\{l_h^v(v) \mid v \in \mathcal{V}\}$ and $\{l_h^v(v') \mid v' \in \mathcal{V}'\}$
		\Ensure  Whether $\mathcal{G}$ and $\mathcal{G}'$ are isomorphism or not.
	\end{algorithmic}  
\end{algorithm}

The key idea of the series Weisfeiler-Lehman test algorithm is to augment the vertex labels by the sorted set of vertex labels of neighboring vertices and compress those augmented labels into new labels. These steps are repeated until the vertex label sets of the two input graphs differ or the number of iterations reaches $h$. The definition of neighbor is the biggest obstacle to extending it from graphs to hypergraphs. In simple graphs, each edge only links two vertices. Thus, we say that two vertices are neighbors to each other if there is an edge between them. However, the edge (hyperedge) can link more than two vertices in hypergraphs. The definition of neighbor relation is more difficult in hypergraphs. Motivated by \cite{hgnnp}, we adopt the hyperedge's vertex neighbors $\mathcal{N}_v(e)$ and vertex's hyperedge neighbors $\mathcal{N}_e(v)$ to represent the complex neighbor relation in hypergraphs. This way, the complex neighbor relation in a hypergraph is transferred hierarchically, and the hyperedge bridges the relationship between the vertex set and another, as shown in Figure \ref{fig:neighbor}. Given a hypergraph $\mathcal{G} = \{ \mathcal{V}, \mathcal{E} \}$, for each vertex $v \in \mathcal{V}$ its hyperedge neighbors $\mathcal{N}_e(v)$ and for each hyperedge $e \in \mathcal{E}$ its vertex neighbors $\mathcal{N}_v(e)$ can be formulated as
\begin{equation}
\label{eq:hg_nbr}
    \left\{
        \begin{aligned}
            \mathcal{N}_e(v)& = \{ e \mid \mathbf{H}(v,e) = 1, e \in \mathcal{E} \}\\
            \mathcal{N}_v(e)& = \{ v \mid \mathbf{H}(v,e) = 1, v \in \mathcal{V} \}
        \end{aligned}
    \right. ,
\end{equation}
where $\mathbf{H} \in \{0, 1\}^{|\mathcal{V}| \times |\mathcal{E}|}$ is the hypergraph incidence matrix. With the definition of two neighbor functions $\mathcal{N}_e(\cdot)$ and $\mathcal{N}_v(\cdot)$, we can easily quantify the complex beyond-pairwise correlation in hypergraphs.
In the following, we will introduce the generalized hypergraph Weisfiler-Lehman algorithm based on the two neighbor functions, as presented in Algorithm \ref{alg:hg_wl}.

Supposing we have two hypergraphs $\mathcal{G}=\{\mathcal{V}, \mathcal{E} \}$ and $\mathcal{G'} = \{ \mathcal{V}', \mathcal{E}' \}$, the goal of the algorithm is to test whether the two input hypergraphs are isomorphic. Assuming each vertex $v/v' \in \mathcal{V}/\mathcal{V'}$ in the two hypergraphs is associated with labels via vertex label map $\ell^v := v/v' \rightarrow c$. Similarly, the label of each hyperedge $e/e' \in \mathcal{E}/\mathcal{E}'$ is given by the hyperedge label map $\ell^e := e/e' \rightarrow c$. At first, the number of vertices must be the same ($|\mathcal{V}| = |\mathcal{V'}|$), or they are not isomorphic. In each iteration, the labels of vertices are organized using a multiset \cite{multiset}, which allows for multiple instances for each of its elements compared to a set from a mathematical perspective. The multisets $\{l_i^v(v) \mid v \in \mathcal{V}\}$ and $\{l_i^v(v') \mid v' \in \mathcal{V}'\}$ are the primary identifiers in $h$ times iterations. If they are all the same, the two hypergraphs are isomorphic; otherwise, they are not. The initial elements of the vertex/hyperedge label multiset are initialized by the original vertex/hypergraph map $\ell^v(\cdot)/\ell^e(\cdot)$. Each iteration includes two sub-processes: vertex labels $\rightarrow$ hyperedge labels and hyperedge labels $\rightarrow$ vertex labels. 

In the first sub-process (vertex labels $\rightarrow$ hyperedge labels), for each hyperedge $e \in \mathcal{E}$, we construct a multiset $M^v_i(e)$ with the labels from its vertex neighbor set $\mathcal{N}_v(e)$, and its elements are converted to strings to create a string $S_e$. The sorted order of the multisets ensures that all identical strings are mapped to the same number, as they occur in a consecutive block. To incorporate the root hyperedge information, we add a prefix $l^e_{i-1}(e)$ to the string $S_e$. Then, the sting $S_e$ is compressed with a compression function $f(\cdot)$ to generate a short label for each hyperedge $e$. Those short labels serve as the most recent labels for the hyperedges in the hypergraph. Note that the compression function $f(\cdot)$ must be an injective function, meaning that $f(l)=f(l')$ if and only if $l=l'$. In the second sub-process (hyperedge labels $\rightarrow$ vertex labels), the hyperedge labels' neighbor gathering, sorting, and compression are conducted similarly to the first sub-process. 

Finally, after $h$ iterations, the hypergraph Weisfiler-Lehman test algorithm can determine whether the two hypergraphs are isomorphic. Similar to the original Weisfeiler-Lehman algorithm \cite{graph_wl, graph_wl_valid}, the hypergraph Weisfeiler-Lehman can be applied to almost all hypergraphs (Section \ref{sec:exp:hypergraph} of Experiments). However, there are still some cases where it fails. Figure \ref{fig:alg_failed} illustrates an example in which the proposed hypergraph Weisfeiler-Lehman algorithm fails. In $i$-th iteration, the algorithm encodes the specified subtree with a height of $i$ for each vertex in the hypergraph. Thus, comparing the vertex label multiset $l_i^v(v)$  is equivalent to comparing the rooted subtree with a height of $i$ for the given vertex $v$. Additionally, in each iteration, the former rooted subtree information can be reused during the two-stage relabeling process. This approach avoids redundant computations involved in rebuilding the subtree with a specified height. Therefore, the devised hypergraph Weisfeiler-Lehman test algorithm is effective and efficient for the hypergraph isomorphism test.

\begin{figure}
    \centering
    \includegraphics[width=0.42\textwidth]{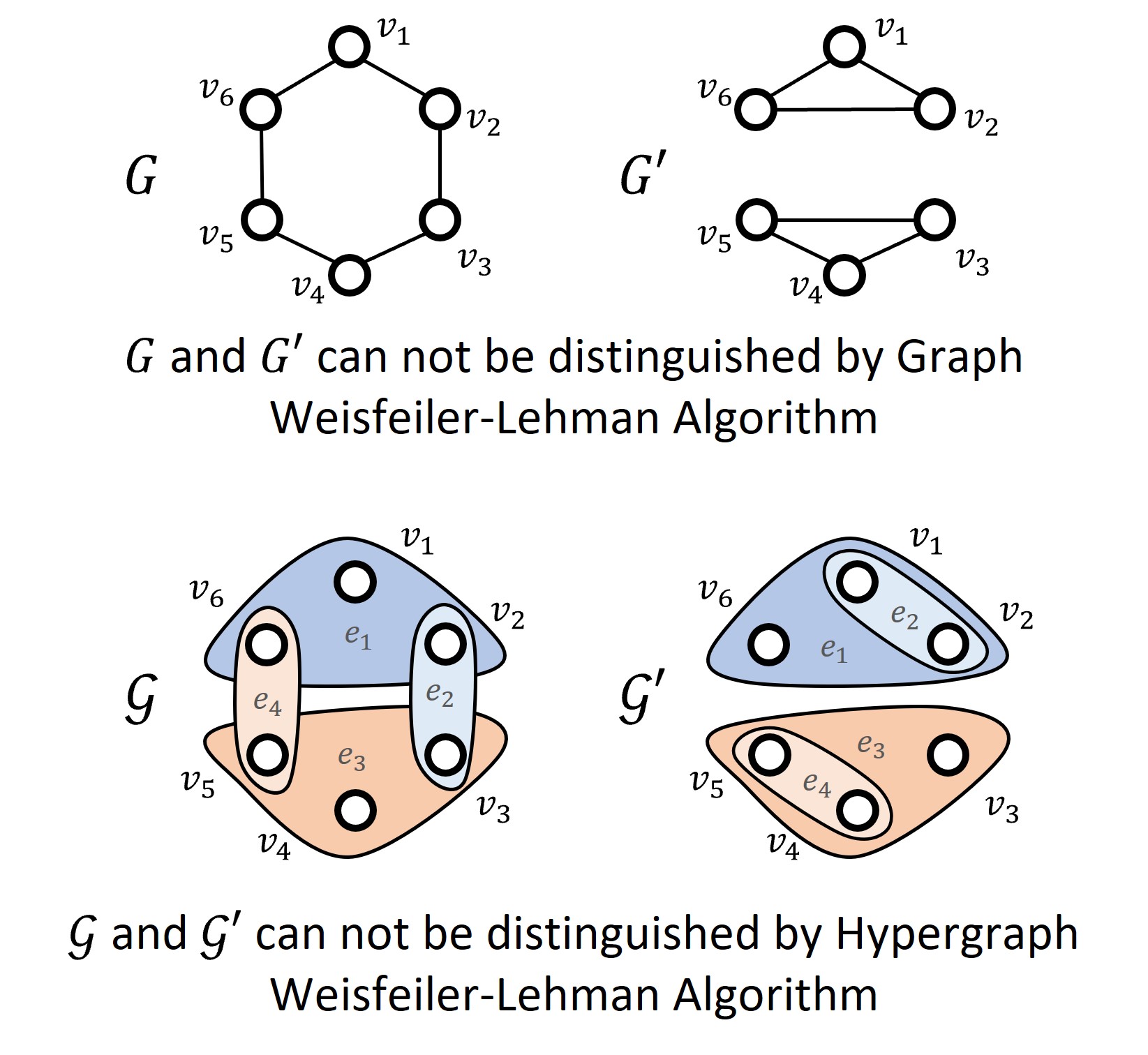}
    \caption{Examples of Graph/Hypergraph Weisfeiler-Lehman Algorithm Failed.}
    \label{fig:alg_failed}
\end{figure}

\textbf{Complexity. }
\label{sec:fw_complexity}
The runtime complexity of the proposed hypergraph Weisfeiler-Lehman test algorithm with $h$ iterations is $\mathcal{O}\left(hm\right)$. Here, $m$ is the capacity of the hypergraph, which can be computed as the number of non-zero elements in the hypergraph incidence matrix $\mathbf{H}$. The value of $m$ can be computed by either $m = |\mathcal{V}|\bar{d}_v$, or $m = |\mathcal{E}|\bar{d}_e$. $\bar{d}_v$ and $\bar{d}_e$ are the average degrees of vertices and hyperedges, which can be computed by $\bar{d}_v = \frac{1}{|\mathcal{V}|} \sum_{v \in \mathcal{V}} d_v$ and $\bar{d}_e = \frac{1}{|\mathcal{E}|} \sum_{e \in \mathcal{E}} d_e$, respectively.
Here we first consider the subprocess of vertex labels $\rightarrow$ hyperedge labels. Clearly, the worst runtime of the neighbor gathering and stringify are $\mathcal{O}(\bar{d}_e)$. Due to the multiset only containing finite labels, we can use the radix sort algorithm\cite{radix_sort} to achieve the time complexity $\mathcal{O}(\bar{d}_e)$. The string concatenation, string compression, and the relabeling cost $\mathcal{O}(1)$ time complexity. Thus, the runtime of the first subprocess is $\mathcal{O}(|\mathcal{E}|\bar{d}_e)$ for all hyperedges. Similarly, the runtime of the second subprocess is $\mathcal{O}(|\mathcal{V}|\bar{d}_v)$ for all vertices. Hence, all those steps with $h$ iterations result in a total runtime of $\mathcal{O}\left(hm\right)$.

\begin{figure*}[!htbp]
    \centering
    \includegraphics[width=0.88\textwidth]{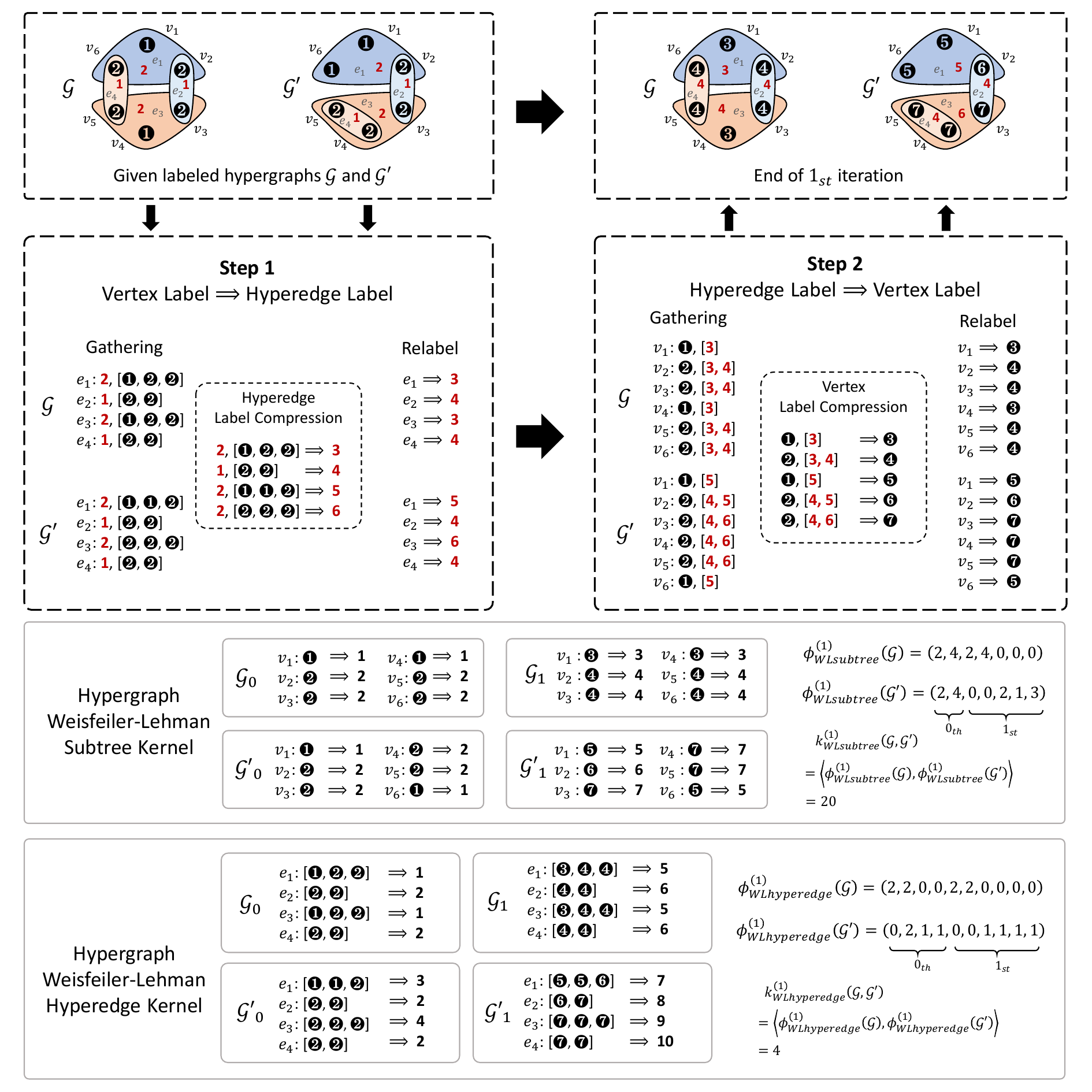}
    \caption{Illustration of hypergraph Weisfeiler-Lehman kernel algorithm.}
    \label{fig:alg}
\end{figure*}

\subsection{Hypergraph Weisfeiler-Lehman Kernel Framework}
In this subsection, we define the hypergraph Weifeiler-Lehman sequences and present the definition of the general hypergraph Weifeiler-Lehman kernel based on the proposed hypergraph Weifeiler-Lehman test algorithm. 

\subsubsection{Hypergraph Weisfeiler-Lehman Sequence}
In iteration $i$ of the hypergraph Weifeiler-Lehman algorithm, we generate two labeling functions $l_i^v$  and $l_i^e$ for vertices and hyperedges, respectively. Note that those labeling functions are concordant for the input hypergraph $\mathcal{G}$ and $\mathcal{G}'$. If two vertices in the hypergraph $\mathcal{G}$ or hypergraph $\mathcal{G}'$ share the same label, it indicates that these vertices possess identical rooted subtrees. Here, we treat each iteration (including two subprocesses) as a function $r((\mathcal{V}, \mathcal{E}, l_i^v, l_i^e)) = (\mathcal{V}, \mathcal{E}, l_{i+1}^v, l_{i+1}^e)$ that transforms the input hypergraphs in the same manner. After $h$ iterations, we obtain a sequence of $h+1$ hypergraphs (including the original hypergraph). This sequence is known as the hypergraph Weisfeiler-Lehman sequence, and it can be defined as follows.

\begin{definition} 
\label{def:hg_seq}
Given a hypergraph $\mathcal{G} = (\mathcal{V}, \mathcal{E}, \ell^v, \ell^e)$ and relabeling function $r := (\mathcal{V}, \mathcal{E}, l_i^v, l_i^e) \rightarrow (\mathcal{V}, \mathcal{E}, l_{i+1}^v, l_{i+1}^e)$, $\mathcal{G}_i$ indicates the relabeled hypergraph after $i$ iterations. Then, the \textbf{ hypergraph Weisfiler-Lehman sequence } can be defined as
\begin{equation}
    \{ \mathcal{G}_0, \cdots, \mathcal{G}_h \} = \{ (\mathcal{V}, \mathcal{E}, l_{0}^v, l_{0}^e), \cdots, (\mathcal{V}, \mathcal{E}, l_{h}^v, l_{h}^e) \}, 
    \nonumber
\end{equation}
where $\mathcal{G}_0 = \mathcal{G}$ and $l_0 = \ell_0$, the \textbf{ hypergraph Weisfeiler-Lehman sequence} up to height $h$ of hypergraph $\mathcal{G}$.
\end{definition}

In Definition \ref{def:hg_seq}, $\mathcal{G}_0$, $l_0^v$, and $l_0^e$ are initialized from the original hypergraph $\mathcal{G}$. The relabeling function $r := \mathcal{G}_{i-1} \rightarrow \mathcal{G}_i$ outcomes the relabeled hypergraph from the last iteration $\mathcal{G}_{i-1}$. In the sequence, the vertex set $\mathcal{V}$, and hyperedge set $\mathcal{E}$ are identical, while the vertex and hyperedge labels change in each iteration.

\subsubsection{General Hypergraph Weisfeiler-Lehman Kernel}
Given hypergraphs, we can obtain the hypergraph Weisfeiler-Lehman sequences by Algorithm \ref{alg:hg_wl}. To generalize the hypergraph embeddings from those sequences for computation, we then define the general hypergraph Weisfeiler-Lehman kernel. 

\begin{definition}
\label{def:hg_wl_fw}
Let $k$ be any kernel for hypergraphs that we will call the base kernel, given two hypergraphs $\mathcal{G}$ and $\mathcal{G}'$, the \textbf{hypergraph Weisfeiler-Lehman kernel} with $h$ iterations can be defined as 
\begin{equation}
    k_{\text{WL}}^{(h)}(\mathcal{G}, \mathcal{G}') = k(\mathcal{G}_0, \mathcal{G}_0') + k(\mathcal{G}_1, \mathcal{G}_1') + \cdots + k(\mathcal{G}_h, \mathcal{G}_h'),
\end{equation}
where $\{ \mathcal{G}_0, \mathcal{G}_1, \cdots, \mathcal{G}_h \}$ and $\{ \mathcal{G}_0', \mathcal{G}_1', \cdots, \mathcal{G}_h' \}$ are the \textbf{hypergraph Weisfeiler-Lehman sequence} of hypergraph $\mathcal{G}$ and $\mathcal{G}'$, respectively.
\end{definition}

Definition \ref{def:hg_wl_fw} presents a comprehensive framework for kernel-based hypergraph embedding, capable of incorporating discrete rooted subtrees with varying heights into consideration. When given an explicit strategy of the basic kernel, like label types counting, the hypergraph Weisfeiler-Lehman kernel will transform the original hypergraph structure into a vector embedding in the Hilbert space. This embedding representation enables various tasks, including hypergraph classification, retrieval, and similarity measurement, to be accomplished. 

\begin{theorem}
Let the base kernel $k$ be any positive semidefinite kernel on hypergraphs. Then, the corresponding hypergraph Weisfeiler-Lehman kernel $k_{\text{WL}}^{(h)}$ is positive semidefinite.
\end{theorem}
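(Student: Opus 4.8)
The plan is to reduce the statement to two elementary closure properties of positive semidefinite (PSD) kernels: closure under precomposition with an arbitrary fixed map, and closure under finite sums. Once these are in place, the theorem follows almost immediately from the structure of Definition \ref{def:hg_wl_fw}, because each summand $k(\mathcal{G}_i, \mathcal{G}_i')$ is nothing but the base kernel $k$ evaluated at deterministically relabeled versions of the two input hypergraphs.

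First I would make precise that the per-iteration relabeling $r$ is a deterministic function. By Definition \ref{def:hg_seq}, $\mathcal{G}_i = r^{(i)}(\mathcal{G})$, the $i$-fold composition of $r$ applied to $\mathcal{G}$, and this output depends only on $\mathcal{G}$ (the compression function $f$ is injective and, crucially, applied concordantly across all hypergraphs under consideration). Hence for each fixed $i \in \{0, 1, \dots, h\}$ the map $\mathcal{G} \mapsto \mathcal{G}_i$ is well defined. I would then record the composition lemma: if $k$ is PSD and $\psi$ is any map into its domain, then $\tilde{k}(x, x') := k(\psi(x), \psi(x'))$ is PSD. The proof is one line --- for any finite sample $x_1, \dots, x_n$ and coefficients $c_1, \dots, c_n \in \mathbb{R}$,
\begin{equation}
\nonumber
\sum_{j,l} c_j c_l \, \tilde{k}(x_j, x_l) = \sum_{j,l} c_j c_l \, k(\psi(x_j), \psi(x_l)) \ge 0,
\end{equation}
since the right-hand side is just the PSD condition for $k$ evaluated at the points $\psi(x_1), \dots, \psi(x_n)$. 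Taking $\psi = r^{(i)}$ shows that each term $k_i(\mathcal{G}, \mathcal{G}') := k(\mathcal{G}_i, \mathcal{G}_i')$ is PSD.

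It remains to observe that a finite sum of PSD kernels is PSD: the corresponding Gram matrices are symmetric PSD, and the cone of symmetric PSD matrices is closed under addition, so $\sum_{i=0}^{h} k_i$ has a PSD Gram matrix on every finite sample. Applying this to $k_{\text{WL}}^{(h)} = \sum_{i=0}^{h} k_i$ completes the argument.

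The main obstacle is not the algebra but the bookkeeping in the first step: one must justify that the hypergraph Weisfeiler-Lehman relabeling genuinely defines a single function applied uniformly to all inputs. This is exactly the \emph{concordance} property noted after Definition \ref{def:hg_seq} --- the compression function $f$ must assign the same compressed label to the same multiset-augmented label wherever it occurs, across the whole collection of hypergraphs being compared, so that $r^{(i)}$ is a bona fide map on that collection rather than a per-hypergraph relabeling. With concordance in hand, the composition lemma applies verbatim; without it, the individual summands would not be comparable and the reduction would break down.
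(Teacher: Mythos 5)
Your proposal is correct and follows essentially the same route as the paper: both decompose $k_{\text{WL}}^{(h)}$ into the summands $k(\mathcal{G}_i,\mathcal{G}_i') = k(R^i(\mathcal{G}), R^i(\mathcal{G}'))$, observe that precomposing a PSD kernel with the fixed deterministic relabeling map preserves positive semidefiniteness, and conclude by closure of PSD kernels under finite sums. The only (minor, and if anything cleaner) difference is that you verify the precomposition step directly from the quadratic-form definition, whereas the paper routes it through a feature map $\phi$ and the composition $\psi = \phi \circ R^i$; your explicit remark on the concordance of the compression function is a worthwhile addition the paper leaves implicit.
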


\begin{proof}
Let $\phi$ be the feature mapping corresponding to the kernel $k$:
$$
k(\mathcal{G}_i, \mathcal{G}_i') = \langle \phi(\mathcal{G}_i), \phi(\mathcal{G}_i') \rangle .
$$
We have 
$$
\mathcal{G}_i = \underbrace{r \cdots r}_{i}(\mathcal{G}) = R^i(\mathcal{G}) \quad \text{and} \quad \mathcal{G}_i' = \underbrace{r \cdots r}_{i}(\mathcal{G}') = R^i(\mathcal{G}').
$$
Thus, we have 
$$
k(\mathcal{G}_i, \mathcal{G}_i') = k(R^i(\mathcal{G}), R^i(\mathcal{G}')) = \langle \phi(R^i(\mathcal{G})), \phi(R^i(\mathcal{G}')) \rangle.
$$
We can build a composition function $\psi(\cdot)$ of $\phi(\cdot)$ and $R^i(\cdot)$. Then, we have
$$
k(\mathcal{G}_i, \mathcal{G}_i') = \langle \psi(\mathcal{G}), \psi(\mathcal{G}') \rangle .
$$
Since the sum of positive semidefinite kernel is a positive semidefinite kernel and $k$ is a positive semidefinite kernel of $\mathcal{G}$ and $\mathcal{G}'$, the $k_{\text{WL}}^{(h)}$ is positive semidefinite. 
\end{proof}

In the following, we will implement two basic kernel instances based on the general hypergraph Weisfeiler-Lehman kernel framework: subtree kernel and hyperedges kernel.

\subsection{Hypergraph Weisfeiler-Lehman Subtree Kernel}
Based on the Algorithm \ref{alg:hg_wl}, this subsection presents a natural instance named hypergraph Weisfeiler-Lehman subtree kernel. Since the vertex label in iteration $i$ compress a unique rooted subtree with height $i$, it is an intuitive strategy to use the frequency of discrete labels to describe the original hypergraph structure. By generalizing the Weisfeiler-Lehman subtree kernel \cite{graph_wl_subtree} from graphs to hypergraphs, we define the \textbf{hypergraph Weisfeiler-Lehman subtree kernel} as follows. 

\begin{definition}
\label{def:hg_wl_subtree}
    Let $\mathcal{G}$ and $\mathcal{G}'$ be hypergraphs. Define $\Sigma_i \subseteq \Sigma $ as the vertex label set that occurs in iteration $i$ of the hypergraph Weisfeiler-Lehman algorithm. Let $\{ \mathcal{G}_0, \mathcal{G}_1, \cdots, \mathcal{G}_h\}$ and $\{ \mathcal{G}_0', \mathcal{G}_1', \cdots, \mathcal{G}_h'\}$ be the hypergraph Weisfeiler-Lehman sequences of hypergraph $\mathcal{G}$ and $\mathcal{G}'$, respectively. Let $\Sigma_0$ be the original label set from the hypergraph $\mathcal{G}$ and $\mathcal{G}'$. Assume all $\Sigma_i$ are pairwise disjoint. Without loss of generality, assume that every $\Sigma_i = \{ \sigma_i^1, \sigma_i^2, \cdots, \sigma_i^{|\Sigma_i|} \}$ is ordered. Define a map $c:= \{ \mathcal{G}_i, \mathcal{G}_i' \} \times \sigma_i^j \rightarrow \mathbb{N}, i \in [0, 1, \cdots, h]$, such that $c(\mathcal{G}_i, \sigma_i^j)$ is the number of occurrences of the label $\sigma_i^j$ in the hypergraph $\mathcal{G}_i$. Then, the Weisfeiler-Lehman subtree kernel on two hypergraph $\mathcal{G}$ and $\mathcal{G}'$ with $h$ iterations is defined as
    \begin{equation}
        k_{\text{WLsubtree}}^{(h)}(\mathcal{G}, \mathcal{G}') = \langle \phi_{\text{WLsubtree}}^{(h)}(\mathcal{G}), \phi_{\text{WLsubtree}}^{(h)}(\mathcal{G}') \rangle ,
    \end{equation}
    where
    \begin{small}
    \begin{equation}
        \phi_{\text{WLsubtree}}^{(h)}(\mathcal{G}) = \left( c(\mathcal{G}_0, \sigma_0^{1}), \cdots, c(\mathcal{G}_0, \sigma_0^{|\Sigma_0|}), \cdots, c(\mathcal{G}_h, \sigma_h^{|\Sigma_h|}) \right) ,
        \nonumber
    \end{equation}
    \end{small}
    and 
    \begin{small}
    $$
    \phi_{\text{WLsubtree}}^{(h)}(\mathcal{G}') = \left( c(\mathcal{G}_0', \sigma_0^{1}), \cdots, c(\mathcal{G}_0', \sigma_0^{|\Sigma_0|}), \cdots, c(\mathcal{G}_h', \sigma_h^{|\Sigma_h|}) \right) .
    $$
    \end{small}
\end{definition}





For the convenience of computation, the hypergraph Weisfeiler-Lehman subtree kernel can be further written as

\begin{small}
\begin{equation}
    \begin{aligned}
        k_{\text{WLsubtree}}^{(h)} (\mathcal{G}, \mathcal{G}') &= k(\mathcal{G}_0, \mathcal{G}_0') + k(\mathcal{G}_1, \mathcal{G}_1') + \cdots + k(\mathcal{G}_h, \mathcal{G}_h') \\
        &= \sum_{i=0}^{h} \sum_{v \in \mathcal{V}} \sum_{v' \in \mathcal{V}'} \delta \left( l_i^v(v), l_i^v(v') \right) \\
        &= \sum_{i=0}^{h} \sum_{j=1}^{|\Sigma_i|} c(\mathcal{G}_i, \sigma_i^j) \cdot c(\mathcal{G}_i', \sigma_i^j)
    \end{aligned} ,
\end{equation}
\end{small}
where $\delta(a, b)$ is the Dirac kernel, that is, it is $1$ when $a$ and $b$ are equal and $0$ otherwise. $c(\mathcal{G}_i, \sigma_i^j)$ is the element counting function as defined in Definition \ref{def:hg_wl_subtree}.

\textbf{Complexity. }
\label{sec:subtree_complexity}
The runtime complexity of the proposed hypergraph Weisfeiler-Lehman subtree kernel on $N$ hypergraphs with $h$ iterations is $\mathcal{O}\left(Nh\bar{m} + N^2hd\right)$, where $\bar{m}$ is the average hypergraph capacity (referring to Section \ref{sec:fw_complexity}) of $N$ hypergraph. $d$ denotes the dimension of the final hypergraph feature, which equals the sum of $\{|\Sigma_0|, |\Sigma_1|, \cdots, |\Sigma_h| \}$. The computation of the hypergraph Weisfeiler-Lehman subtree kernel can be divided into two stages: building the hypergraph Weifeiler-Lehman sequence and counting the vertex labels for each iteration. Clearly, referring to Section \ref{sec:fw_complexity}, the complexity of the first stage is $\mathcal{O}(Nh\bar{m})$. As for the second stage, we multiply all feature vectors to get the kernel values of all hypergraph pairs. Since the dimension of each feature vector is $d$, the computation of the second stage requires a runtime $\mathcal{O}(N^2hd)$. Hence, all those steps result in a total runtime of $\mathcal{O}\left(Nh\bar{m} + N^2hd\right)$.

\subsection{Hypergraph Weisfeiler-Lehman Hyperedge Kernel}
This subsection presents another instance of the general hypergraph Weisfeiler-Lehman hyperedge kernel: hypergraph Weisfeiler-Lehman hyperedge kernel. Unlike the subtree kernel, this instance counts the frequency of different types of hyperedges. Since each hyperedge can link more than two vertices, we treat it as an ordered set for comparison. Compared with the previous subtree kernel, the hyperedge kernel can directly represent the information of high-order connections in each iteration.
It can be defined as follows.
\begin{definition}
    Let $\mathcal{G}$ and $\mathcal{G}'$ be hypergraphs. Define $\Sigma_i \subseteq \Sigma $ as the vertex label set that occurs in iteration $i$ of the hypergraph Weisfeiler-Lehman algorithm. Let $\{ \mathcal{G}_0, \mathcal{G}_1, \cdots, \mathcal{G}_h\}$ and $\{ \mathcal{G}_0', \mathcal{G}_1', \cdots, \mathcal{G}_h'\}$ be the hypergraph Weisfeiler-Lehman sequences of hypergraph $\mathcal{G}$ and $\mathcal{G}'$, respectively. Let $\Sigma_0$ be the original label set from the hypergraph $\mathcal{G}$ and $\mathcal{G}'$. Assume all $\Sigma_i$ are pairwise disjoint. 
    Define map $z_i := e \rightarrow (l_i^v(v_1), \cdots, l_i^v(v_{|e|})), v_j \in e, l_i^v(v_j) \in \Sigma_i, e \in \mathcal{E}$ as the hyperedge code to denote the hyperedge in $i$-th iteration. Assume each hyperedge code is a sorted vertex label tuple. Define $ \Omega_i \subseteq \Omega $ as the hyperedge code set that occurs in iteration $i$. 
    Without loss of generality, assume that every $\Sigma_i = \{ \sigma_i^1, \sigma_i^2, \cdots, \sigma_i^{|\Sigma_i|} \}$ and $\Omega_i = \{ \omega_i^1, \omega_i^2, \cdots \omega_i^{|\Omega_i|} \}$ are ordered. 
     Define a map $p:= \{ \mathcal{G}_i, \mathcal{G}_i' \} \times \omega_i^j \rightarrow \mathbb{N}, i \in [0, 1, \cdots, h]$, such that $p(\mathcal{G}_i, \omega_i^j)$ is the number of occurrences of the hyperedge code $\omega_i^j$ in the hypergraph $\mathcal{G}_i$. Then, the Weisfeiler-Lehman hyperedge kernel on two hypergraph $\mathcal{G}$ and $\mathcal{G}'$ with $h$ iterations is defined as
    \begin{equation}
        k_{\text{WLhyperedge}}^{(h)}(\mathcal{G}, \mathcal{G}') = \langle \phi_{\text{WLhyperedge}}^{(h)}(\mathcal{G}), \phi_{\text{WLhyperedge}}^{(h)}(\mathcal{G}') \rangle ,
    \end{equation}
    where
    \begin{small}
    \begin{equation}
        \phi_{\text{WLhyperedge}}^{(h)}(\mathcal{G}) = \left( p(\mathcal{G}_0, \omega_0^{1}), \cdots, p(\mathcal{G}_0, \omega_0^{|\Omega_0|}), \cdots, p(\mathcal{G}_h, \omega_h^{|\Omega_h|}) \right) ,
        \nonumber
    \end{equation}
    \end{small}
    and 
    \begin{small}
    $$
    \phi_{\text{WLhyperedge}}^{(h)}(\mathcal{G}') = \left( p(\mathcal{G}_0', \omega_0^{1}), \cdots, p(\mathcal{G}_0', \omega_0^{|\Omega_0|}), \cdots, p(\mathcal{G}_h', \omega_h^{|\Omega_h|}) \right) .
    $$
    \end{small}
\end{definition}

Similarly, the hypergraph Weisfeiler-Lehman hyperedge kernel can be further written as 
\begin{small}
\begin{equation}
\label{eq:k_hyperedge}
    \begin{aligned}
        k_{\text{WLhyperedge}}^{(h)} (\mathcal{G}, \mathcal{G}') &= k(\mathcal{G}_0, \mathcal{G}_0') + \cdots + k(\mathcal{G}_h, \mathcal{G}_h') \\
        &= \sum_{i=0}^{h} \sum_{e \in \mathcal{E}} \sum_{e' \in \mathcal{E}'} \delta \left( z_i(e), z_i(e') \right) \\
        &= \sum_{i=0}^{h} \sum_{j=1}^{|\Omega_i|} p(\mathcal{G}_i, \omega_i^j) \cdot p(\mathcal{G}_i', \omega_i^j)
    \end{aligned} ,
\end{equation}
\end{small}
where $\delta(a, b)$ is the Dirac kernel.

\textbf{Complexity. }
The runtime complexity of the proposed hypergraph Weisfeiler-Lehman hyperedge kernel on $N$ hypergraphs with $h$ iterations is $\mathcal{O}\left(Nh\bar{m} + N^2hd'\right)$. $d'$ is the dimension of the final feature, which equals the sum of $\{|\Omega_0|, |\Omega_1|, \cdots, |\Omega_h| \}$. As Section \ref{sec:subtree_complexity}, the runtime of building the hypergraph Weisfeiler-Lehman sequence is $\mathcal{O}(Nh\bar{m})$. For each hypergraph, the runtime of building hyperedge codes with $h$ iteration is $\mathcal{O}(h|\mathcal{E}|\bar{d}_v)$, which equals $\mathcal{O}(hm)$. Thus, for $N$ hypergraph, the runtime of building hyperedge codes is $\mathcal{O}\left(Nh\bar{m} \right)$. As for the hyperedge code counting stage, the complexity of pairs of $N$ hypergraphs is $\mathcal{O}(N^2hd')$ as shown in Equation \eqref{eq:k_hyperedge}. Hence, all those steps result in a total runtime of $\mathcal{O}\left(Nh\bar{m} + N^2hd'\right)$.

\subsection{Computing in Practice}
Due to the large and uncertain dimensions of the generated feature from the hypergraph Weisfeiler-Lehman kernel, we adopt the commonly used normalization trick \cite{svm_trick} to reduce the dimension of the feature for training and testing hypergraphs. Given the feature matrices $\mathbf{X}_{tr} \in \mathbb{R}^{N_{tr} \times d}, \mathbf{X}_{te} \in \mathbb{R}^{N_{te} \times d}$ extracted with a specifical hypergraph Weisfeiler-Lehman kernel $K$, the $N_{tr}$ and $N_{te}$ denote the number of training hypergraphs and the testing hypergraphs, respectively. We compress them based on the training hypergraphs with the function $Z:= \mathbb{R}^{N \times d} \rightarrow \mathbb{R}^{N \times N_{tr}}$. The compressed features can be computed by $\check{\mathbf{X}}_{tr} = Z(\mathbf{X}_{tr}) \in \mathbb{R}^{N_{tr} \times N_{tr}}$ and $\check{\mathbf{X}}_{te} = Z(\mathbf{X}_{te}) \in \mathbb{R}^{N_{te} \times N_{tr}}$, respectively. Each entry of the compressed features $\check{\mathbf{X}}_{tr}$ and $\check{\mathbf{X}}_{te}$ can be computed as
\begin{equation}
    \check{\mathbf{X}}_{i,j} = \frac{k(\mathcal{G}_i, \mathcal{G}_j)}{\sqrt{k(\mathcal{G}_i, \mathcal{G}_i) k(\mathcal{G}_j, \mathcal{G}_j)}} , \qquad j = 1, \cdots, N_{tr} ,
\end{equation}
where the $i$ is in range $[1, N_{tr}]$ or $[1, N_{te}]$ for training and testing, respectively. The function $Z$ can be defined as
\begin{equation}
    Z(\mathbf{X}) = \frac{\mathbf{X} \mathbf{X}_{tr}^\top}{\sqrt{(\mathbf{X} \circ \mathbf{X} \mathbf{1}_d) \otimes (\mathbf{X}_{tr} \circ \mathbf{X}_{tr} \mathbf{1}_d)^\top}} ,
\end{equation}
where $\circ, \otimes$ denote the pointwise product, and outer product. $\mathbf{1}_d \in \{1\}^{d \times 1}$ is the vector of ones. Note that in the embedding process of the testing hypergraphs, those unseen structures like rooted subtrees and hyperedges are dropped. In this way, the final features from training hypergraphs and testing hypergraphs are comparable to each other.


\subsection{Relation to Graph WL Subtree Kernel}
\label{sec:relation_graph_wl_and_hypergraph_wl}
In this subsection, we prove that the hypergraph Weisfeiler-Lehman subtree kernel can degenerate into the graph Weisfeiler-Lehman subtree kernel confronting the simple graph structure.

\begin{lemma}
    \label{lemma:bijective}
    Given a graph $G = \{ V, E \}$, let the $c$ and $c'$ be the graph/hypergraph Weisfeiler-Lehman subtree kernel's compression functions compress the rooted string $s$/$s'$ to a unique label $l$/$l'$, respectively. Exist a bijective function $\phi$ that maps the graph Weisfeiler-Lehman subtree kernel's rooted string $s$ to hypergraph Weisfeiler-Lehman kernel' rooted string $s'$. 
\end{lemma}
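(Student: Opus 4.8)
The plan is to embed the simple graph $G = \{V, E\}$ into a hypergraph by turning every edge $\{u, w\} \in E$ into a two-vertex hyperedge and assigning all hyperedges a common initial label (the graph carries no edge labels). I would then prove the stronger statement that the graph and hypergraph algorithms induce \emph{the same partition of $V$ at every iteration}, i.e. $l_i(v) = l_i(w) \iff l_i^v(v) = l_i^v(w)$; the desired bijection $\phi$ is then obtained by sending each graph rooted string $s(v)$ to the hypergraph rooted string $S_v$ of the same vertex, with bijectivity falling out of this equivalence. The argument proceeds by induction on the iteration index $i$.

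For the base case $i = 0$, both algorithms initialize with $\ell^v(v) = \ell(v)$, so the strings coincide and $\phi$ is the identity on $\Sigma_0$. For the inductive step I assume a label-respecting bijection at iteration $i-1$ and analyze the two sub-processes of Algorithm \ref{alg:hg_wl}. In the first sub-process, a hyperedge $e = \{v, u\}$ is relabeled by $l_i^e(e) = f(\mathrm{strcat}(l_{i-1}^e(e), \mathrm{str}(\mathrm{sort}(\{l_{i-1}^v(v), l_{i-1}^v(u)\}))))$; since $f$ is injective and all edge prefixes are equal, the hyperedge label is a faithful encoding of the unordered pair $\{l_{i-1}^v(v), l_{i-1}^v(u)\}$ of endpoint labels. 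In the second sub-process, vertex $v$ aggregates $\{l_i^e(e) : e \in \mathcal{N}_e(v)\}$, which is therefore in bijection with the multiset of endpoint pairs $\{\{l_{i-1}^v(v), l_{i-1}^v(u)\} : u \in \mathcal{N}(v)\}$.

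The crux is to show that this multiset of unordered pairs carries exactly the same discriminating information as the graph-WL neighbor multiset $\{l_{i-1}(u) : u \in \mathcal{N}(v)\}$. Because each pair shares the common element $l_{i-1}^v(v)$ (the root's own label, also recorded as the prefix of $S_v$), one can cancel this common element and recover the neighbor-label multiset; conversely each neighbor label determines its pair. I would make this precise as a multiset bijection for each fixed root label, taking care that it remains well-defined when several neighbors share a label (matching multiplicities) and when a neighbor's label equals the root's (the self-paired case). Combining the inductive hypothesis with this cancellation yields $s(v) = s(w) \iff S_v = S_w$, so $\phi_i : s(v) \mapsto S_v$ is well-defined, injective, and surjective onto the hypergraph strings, completing the induction.

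The main obstacle I anticipate is precisely this \emph{un-pairing} step: the hyperedge encoding is symmetric in its two endpoints, so the neighbor's label is not stored directly but only as an unordered pair together with the root. The proof must argue that cancellation of the common root label is unambiguous and multiplicity-preserving, which is where the injectivity of the compression function $f$ and the fact that in a simple graph each neighbor corresponds to exactly one incident hyperedge are both essential.
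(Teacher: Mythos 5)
Your proposal is correct and takes essentially the same route as the paper: both rest on the observation that in the induced $2$-uniform hypergraph every hyperedge label is a redundant encoding of the root label paired with one neighbor label, so the hypergraph rooted string and the graph rooted string are determined by exactly the same data (the paper phrases your ``cancellation'' step as collapsing the expanded set $\{v,\{v,v_1\},\dots,\{v,v_m\}\}$ back to the stem set $\{v,v_1,\dots,v_m\}$, and organizes the argument as separate surjection and injection claims rather than an induction on the vertex partitions). The only imprecision is that ``all edge prefixes are equal'' holds literally only at $i=1$; for $i>1$ the prefixes $l_{i-1}^e(e)$ agree whenever the current endpoint label pairs agree, which your inductive hypothesis delivers because later WL labels determine earlier ones.
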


\begin{proof}
    \textbf{Proof of surjection:} In the $i$-th iteration, given vertex $v$, the compressed rooted string of the graph Weisfeiler-Lehman subtree kernel \cite{graph_wl_subtree} is $l_{i-1}(v)|\langle l_{i-1}(v_j), v_j \in \mathcal{N}(v)\rangle$, where $\mathcal{N}(v) = \{v_1, v_2, \cdots\}$ is the neighbor vertex set of the vertex $v$. $\langle \cdot \rangle$ is the multi-set sorting function. $\cdot|\cdot$ is the string concatenation operation. Finally, in $i$-th iteration, the rooted string of the graph Weisfeiler-Lehman subtree kernel of the vertex $v$ can be formulated as
    \begin{equation}
    \label{eq:graph_rooted_string}
        l_{i-1}(v)|\langle l_{i-1}(v_1), l_{i-1}(v_2), \cdots, l_{i-1}(v_m) \rangle .
    \end{equation}
    By composing label function $l(\cdot)$, we can construct an injective function $f$ that maps an order vertex set to the rooted string. Then, the rooted string of vertex $v$ can be written as $f(v, v_1, v_2, ..., v_m)$.

    Similarly, given the vertex $v$, in the $i$-th iteration, the compressed rooted string of the hypergraph Weisfeiler-Lehman subtree kernel is $l_{i-1}^v(v)|\langle l_i^e(e_j), e_j \in \mathcal{N}_e(v)\rangle$, where $\mathcal{N}_e(v) = \{ e_1, e_2, \cdots \}$ is the hyperedge neighbor set of the given vertex $v$. Then, each hyperedge label can be computed by $l^e_i(e) = l^e_{i-1}(e) | \langle l^v_{i-1}(v_j), v_j \in \mathcal{N}_v(e) \rangle$, where $\mathcal{N}_v(e) = \{ v_1, v_2, \cdots \}$ is vertex neighbor set of the given hyperedge $e$. Finally, in $i$-th iteration, the rooted string of the hypergraph Weisfeiler-Lehman subtree kernel of the vertex $v$ can be formulated as
    \begin{equation}
        \label{eq:initial_hg_rooted_string}
        \begin{aligned}
        l_{i-1}^v(v)|\langle & l_{i-1}^e(e_1)|\langle l^v_{i-1}(v^{e_1}_1), l^v_{i-1}(v^{e_1}_2), \cdots \rangle,  \\
        & l_{i-1}^e(e_2)|\langle l^v_{i-1}(v^{e_2}_1), l^v_{i-1}(v^{e_2}_2), \cdots \rangle,  \\
        &\cdots \\
        & l_{i-1}^e(e_m)|\langle l^v_{i-1}(v^{e_m}_1), l^v_{i-1}(v^{e_m}_2), \cdots \rangle \rangle
        \end{aligned} ,
    \end{equation}
    where $\{ v^{e_j}_1, v^{e_j}_2 \}$ denotes the vertex set that linked by hyperedge $e_j$. Considering that each edge in the given graph $G$ only connects two vertices and one of the connected vertex must be the vertex $v$, the Equation \eqref{eq:initial_hg_rooted_string} can be further written as:
    \begin{equation}
        \label{eq:hg_rooted_string}
        \begin{aligned}
        l_{i-1}^v(v)|\langle & l_{i-1}^e(e_1)|\langle l^v_{i-1}(v), l^v_{i-1}(v_1) \rangle,  \\
        & l_{i-1}^e(e_2)|\langle l^v_{i-1}(v), l^v_{i-1}(v_2) \rangle,  \\
        &\cdots \\
        & l_{i-1}^e(e_m)|\langle l^v_{i-1}(v), l^v_{i-1}(v_m) \rangle \rangle
        \end{aligned} .
    \end{equation}
    Since the degree of the hyperedge in the simple graph is constant to $2$, the hyperedge label $l^e_0(e)$ is all the same: $l^e_0(e_1) = l^e_0(e_2) = \cdots = l^e_0(e_m) $.

    When $i=0$, the two algorithms both count the number of different types of labels, thus yielding the same number of rooted strings. 

    When $i=1$, by composing the label function $l^v(\cdot)$ and initial hyperedge label $l^e_0(e)$, we can construct an inject function $f'$ that map the vertex set to string. Then, the rooted string of the vertex $v$ extracted from the hypergraph Weisfeiler-Lehman can be written as $f'(v, v_1, v_2, \cdots, v_m)$, which can be mapped the rooted string extracted from the graph Weisfeiler-Lehman. Since the input graph is the same and the number of the rooted subtree of height $1$ is also the same, the number of the unique labels in iteration $1$ of the two algorithms is the same.

    When $i>1$, the $l_{i-1}^e(e)$ is a function with respect to the $\{ l_{i-2}^e(e), l_{i-2}^v(v), l_{i-2}^v(v') \}$, where $v$ and $v'$ is the specified root vertex and another vertex that linked by edge $e$, respectively. For the specified vertex $v$, the related vertex set $\{v, \{v, v_1\}, \{v, v_2\}, \cdots, \{v, v_m\}\}$ that determining the rooted string can then be reduced to $\{v, v_1, v_2, \cdots, v_m\}$. We can also construct an injective function $f'$ for mapping. Then, the rooted string can be written as $f'(v, v_1, v_2, \cdots, v_m)$. Hence, in $i>1$ iteration, the number of the compressed label of the two algorithms are also the same. 

    Since the two algorithms yield the same number of unique compressed labels in each iteration, the number of the rooted strings extracted from the two algorithms with $h$ iterations is the same. Those rooted strings can also be determined from the same sorted vertex set. Thus, the surjection holds.
    
    \textbf{Proof of injection:} First, we construct a map $\phi := s \rightarrow s'$ from the graph Weisfeiler-Lehman rooted string to the hypergraph Weisfeiler-Lehman rooted string. Given a rooted string $s$ like Equation \ref{eq:graph_rooted_string}, we first extract the stem vertex set $\{ v, v_1, v_2, \cdots, v_m \}$ with injective function $f$. Note that the first element of the stem vertex set is the root vertex $v$ and the rest is the neighbor vertex set of the vertex $v$. Then, we transform the stem vertex set into an expanded set $\{v, \{v, v_1\}, \{v, v_2\}, \cdots, \{v, v_m\}\}$. In the rooted string of the hypergraph Weisfeiler-Lehman subtree kernel like Equation \eqref{eq:hg_rooted_string}, the updated hyperedge label $l_{i-1}^e(e_j)|\langle l^v_{i-1}(v), l^v_{i-1}(v_j)$ is a function of the $\{v, v_j\}$ and the initial hyperedge label $\ell(e)$ is all the same. Hence, the rooted string $s'$ can also be computed from the stem vertex set $\{ v, v_1, v_2, \cdots, v_m \}$, which can be denoted by the injective function $g$. Then, the map $\phi :\ s \rightarrow s'$ can be implemented as a composition function $ \phi = g \circ f$. 
    
    Next, we prove $\forall s_1, s_2 \in S, \phi(s_1) = \phi(s_2) \Rightarrow s_1 = s_2$.  $S$ denotes the final rooted string set extracted by the graph Weisfeiler-Lehman subtree kernel. From the definition of the algorithm \cite{graph_wl_subtree}, we know that any two elements in the $S$ are different and indicate different rooted subtree structures. Assume we have two rooted strings $s_1$ and $s_2$ that $\phi(s_1) = \phi(s_2)$, then the root vertex of the $\phi(s_1)$ and $\phi(s_2)$ are the same, which is termed by vertex $v$. Clearly, the operation of the injective function $g$ is reversible. Then the two same rooted strings $\phi(s_1)$ and $\phi(s_2)$ can be transformed into the stem vertex set $\{v, \{v, v_1\}, \{v, v_2\}, \cdots, \{v, v_m\}\}$. After removing the root vertex for each element and applying the inverse function of $f$, the $\phi(s_1)$ and $\phi(s_2)$ will produce the same rooted string $s_1$ and $s_2$. Since the assumption to be false leads to a contradiction, it is concluded that the injection holds.

    Finally, since the surjection and injection hold, according to the definition of the bijection, Lemma \ref{lemma:bijective} holds.
\end{proof}

\begin{theorem}
\label{theorem:equal}
    Given a set of graph $\mathcal{G} = \{ G_1, G_2, \cdots, G_n \}$, the kernel matrix $\mathbf{K} \in \mathbb{R}^{n \times n}$ and $\mathbf{K}' \in \mathbb{R}^{n \times n}$ are generated by the graph Weisfeiler-Lehman subtree kernel and the hypergraph Weisfeiler-Lehman subtree kernel, respectively. Then, the $\mathbf{K}$ and $\mathbf{K}'$ are identical. 
\end{theorem}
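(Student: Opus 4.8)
The plan is to leverage Lemma \ref{lemma:bijective}, which gives a bijection between the rooted strings produced by the two algorithms, and to promote this string-level bijection to a single global relabeling of the feature coordinates that is the \emph{same} for every graph in the collection. Since each kernel entry is an inner product of feature vectors, and a coordinate permutation applied uniformly to both arguments leaves every inner product unchanged, the two kernel matrices must agree entrywise. The essential point is that one never needs to compute the kernels explicitly; it suffices to show that the graph WL and hypergraph WL feature maps differ only by a fixed reindexing.

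First I would view each $G_k$ simultaneously as a degree-two hypergraph $\mathcal{G}_k$ on the same vertex set, and run both algorithms. For a vertex $v$ at iteration $i$, let $s_i(v)$ be its graph WL rooted string (Equation \eqref{eq:graph_rooted_string}) and $s_i'(v)$ its hypergraph WL rooted string (Equation \eqref{eq:hg_rooted_string}). Lemma \ref{lemma:bijective} supplies a bijection $\phi = g \circ f$ with $s_i'(v) = \phi(s_i(v))$; crucially, $\phi$ is assembled only from the label-composition maps, not from any particular input graph, so it is the identical map across all $G_k$. Composing with the injective compression maps then induces a bijection $\bar\phi$ between the graph-side label alphabet $\Sigma_i$ and the hypergraph-side alphabet $\tilde\Sigma_i$, and because $\bar\phi$ matches precisely those vertices whose rooted strings correspond under $\phi$, the counting functions satisfy, for every $G_k$, every $i$, and every $\sigma \in \Sigma_i$,
\begin{equation}
\nonumber
c(G_{k,i}, \sigma) = c(\mathcal{G}_{k,i}, \bar\phi(\sigma)),
\end{equation}
where $G_{k,i}$ and $\mathcal{G}_{k,i}$ denote the $i$-th terms of the respective Weisfeiler-Lehman sequences of $G_k$.

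With this coordinate identity in hand, $\phi_{\text{WLsubtree}}^{(h)}(\mathcal{G}_k)$ is exactly $\phi_{\text{WLsubtree}}^{(h)}(G_k)$ reindexed by the single permutation $\bar\phi$, so for any pair $G_a, G_b$,
\begin{equation}
\nonumber
\begin{aligned}
\mathbf{K}'_{a,b} &= \sum_{i=0}^{h}\sum_{\sigma \in \Sigma_i} c(\mathcal{G}_{a,i}, \bar\phi(\sigma))\, c(\mathcal{G}_{b,i}, \bar\phi(\sigma)) \\
&= \sum_{i=0}^{h}\sum_{\sigma \in \Sigma_i} c(G_{a,i}, \sigma)\, c(G_{b,i}, \sigma) = \mathbf{K}_{a,b},
\end{aligned}
\end{equation}
the reindexing by the bijection $\bar\phi$ leaving the finite sum invariant. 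Since $a,b$ were arbitrary, $\mathbf{K} = \mathbf{K}'$. The main obstacle I anticipate is not the inner-product algebra but the bookkeeping around \emph{universality}: one must argue that the bijection of Lemma \ref{lemma:bijective} is constructed independently of the input graph, so that a label shared by two different graphs is sent to the same hypergraph label, and that $\Sigma_i$ and $\tilde\Sigma_i$ are genuinely in iteration-by-iteration bijection rather than merely equinumerous. Verifying this consistency in the $i>1$ case, where earlier subtree information is reused in the two-stage relabeling, is where the argument must be pinned down most carefully.
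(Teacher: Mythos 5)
Your proposal is correct and follows essentially the same route as the paper: both invoke Lemma \ref{lemma:bijective} to obtain a graph-independent bijection of feature coordinates (the paper phrases it as a single permutation matrix $\mathbf{P}$ with $x_i' = \mathbf{P}x_i$, so that ${x'}_i^\top x'_j = x_i^\top \mathbf{P}^\top\mathbf{P}x_j = x_i^\top x_j$), and conclude that every kernel entry is preserved. Your explicit attention to the universality of the reindexing across all graphs in the collection is a point the paper's proof assumes silently, but it does not constitute a different argument.
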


\begin{proof}
    Given graph $G_i \in \mathcal{G}$, the feature vector $x_i \in \mathbb{R}^{c \times 1}$ and $x_i' \in \mathbb{R}^{c \times 1}$ are the graph embeddings extracted from the graph/hypergraph Weisfeiler-Lehman subtree kernel, respectively. Since each element in the feature vector corresponds to a unique rooted string and with Lemma \ref{lemma:bijective}, there exists a permutation matrix $\mathbf{P} \in \{0, 1\}^{N \times N}$, such that $x_i' = \mathbf{P} x_i$. The entries of kernel matrix $\mathbf{K}$ and $\mathbf{K}'$ can be calculated by:
    \begin{equation}
        \left\{
            \begin{aligned}
                \mathbf{K}_{ij} &= k(G_i, G_j) = \langle x_i, x_j \rangle = x_i^\top  x_j \\
                \mathbf{K}'_{ij} &= k'(G_i, G_j) = \langle x'_i, x'_j \rangle = {x'}_i^\top x'_j \\
            \end{aligned}
        \right. ,
    \end{equation}
    where $\langle \cdot, \cdot \rangle$ denotes the inner product operation of two vectors. According to the permutation matrix, we have 
    \begin{equation}
        \begin{aligned}
        {x'}_i^\top x'_j &= (\mathbf{P} x_i)^\top (\mathbf{P} x_j) \\
        &= x_i^\top \mathbf{P}^\top \mathbf{P} x_j \\
        &= x_i^\top x_j
        \end{aligned} .
    \end{equation}
    Hence, each element in position $i, j$ of kernel matrix $\mathbf{K}$ and $\mathbf{K}'$ are identical. The theorem holds.
\end{proof}


\section{Experiments}
In this section, we conduct experiments on seven graph and $12$ hypergraph datasets, which consist of both synthetic and real-world datasets. Additionally, We include an ablation study to compare the two proposed types of hypergraph Weisfeiler-Lehman kernels. Furthermore, we compare the computation runtime and robustness of these methods when facing data noising challenges.

\subsection{Common Settings}
This subsection outlines the standard experimental settings applied across all experiments, which may be overridden in each specified experiment. In the following subsections, the abbreviation \textit{WL} is utilized to represent \textit{Weisfeiler-Lehman}.

\textbf{Compared Methods. }
Since we focus on pure kernel methods on correlation data, for a fair comparison, we select four typical graph/hypergraph kernel-based methods.
\textbf{GraphLet Kernel \cite{graphlet}.} A typical sampling-based graph kernel method. It draws and counts small factor graphs from a given graph. 
\textbf{Graph WL Subtree Kernel \cite{graph_wl_subtree}.} The method is naturally generalized from the Graph Weisfiler-Lehman algorithm, which counts the rooted subtrees of a given graph. 
\textbf{Hypergraph Rooted Kernel \cite{hg_root}.} A typical sampling-based hypergraph kernel method. It draws and counts hyper-paths from a given hypergraph.
\textbf{Hypergraph Directed Line Kernel \cite{hg_line}.} Hypergraph Directed Line Kernel is a transformation-based method that transforms the hypergraph into the directed graph. Then, the directed Weisfiler-Lehman algorithm is applied to extract the feature vector of a given hypergraph.

\subsubsection{Training Details}
For a fair comparison, we set the random seed in the $[2021, 2025]$ range, and the 5-fold cross-validation is adopted for each experiment. We report the average results of five folds and five seeds for all methods. Since those kernel methods can only generate the graph/hypergraph features, we utilize the standard SVM \cite{SVM, SVM_lib} as the classifier to compare those methods. Note that the SVM is initialized with the same hyper-parameters for all methods.

\subsubsection{Other Details}
To assess the efficacy of capturing distinct correlation structures, we exclude the original vertex features of all datasets. Each vertex label is initialized based on the vertex degree, and the edge/hyperedge label is initialized according to the edge/hyperedge degree. Note that the edge label in graphs remains constant at two. 

Since traditional graph kernel methods cannot be directly applied to hypergraphs, we employ the common technique of clique expansion \cite{hgnnp} to transform the hypergraphs into graphs. As for the hypergraph kernels confronting the graph datasets, we consider these graphs as 2-uniform hypergraphs directly.

For the single-label classification task, we evaluate the accuracy (Acc) and macro F1 score (F1\_ma) as performance metrics. For the multi-label classification task, we compare the exact match ratio (EMR) and example-based accuracy (EB-Acc) as evaluation measures.

\subsection{Experiments on Graph Datasets}
This subsection presents the experiments on graph datasets, including two synthetic and five real-world graph datasets. The statistics of these datasets are summarized in Table \ref{tab:stat_graph}.

\begin{table}
\centering
\caption{Statistic information of seven graph datasets.}
\label{tab:stat_graph}
\begin{threeparttable}
\begin{tabular}{ccccc}
\toprule
\multirow{2}{*}{Dataset} & \multirow{2}{*}{\# graphs} & \multirow{2}{*}{\# classes} & Avg. & Avg. \\
 &  &  & \# vertices & \# edges \\
 \midrule
RG-Macro & 1000 & 6 & 27.48 & 47.7 \\
RG-Sub & 1000 & 6 & 27.48 & 47.7 \\
IMDB-BINARY & 1000 & 2 & 19.8 & 96.5 \\
IMDB-MULTI & 1500 & 3 & 13.0 & 65.9 \\
MUTAG & 188 & 2 & 17.9 & 19.8 \\
NCI1 & 4110 & 2 & 29.9 & 32.3 \\
PROTEINS & 1113 & 2 & 39.1 & 72.8 \\
\bottomrule
\end{tabular}
    \begin{tablenotes}
        \item ``\#'' denotes ``number of'', and ``Avg.'' is short for ``Average''.
    \end{tablenotes}
\end{threeparttable}
\end{table}

\subsubsection{Datasets}

\textit{Synthetic Graph Datasets:} We have devised and generated two synthetic graph datasets, RG-Macro and RG-Sub, to test different algorithms' abilities on diverse low-order structures. We employed a two-step approach to create these graphs. First, we handpicked six widely-used graph structures as "subgraph" factors, including complete, bipartite, circle, cube, star, and wheel graphs. Randomly picked constructor parameters generate each graph. As an illustration, a circle graph characterized by a parameter $ \delta = 5 $ comprises a set of five interconnected nodes, with each node being connected to its neighboring node by a single edge. Second, we devise six ways to combine the "subgraph" factor graphs to produce a larger graph, such as chain linking, star linking, circle linking, and more. These linking strategies were recorded as a "macro" structure of the graph. Both datasets contain the same graphs, but the target label is different. The RG-Macro dataset uses the "macro" graph structure as a label, while the RG-Sub dataset tests the ability to differentiate between "subgraph" structures. 

\textit{Real-world Graph Datasets:} IMDB-BINARY\cite{IMDB} and IMDB-MULTI\cite{IMDB} are movie collaboration datasets. Each vertex in the two datasets is an actor. Each movie corresponds to a graph and associates a genre as its classification target. If two actors exist simultaneously in another movie, an edge will be created to link them. IMDB-BINARY contains movies from two genres: Action and Romance. IMDB-MULTI is a multi-class version of IMDB-BINARY and contains Comedy, Romance, and Sci-Fi genres. NCI1\cite{NCI1}, MUTAG\cite{MUTAG}, and PROTEINS\cite{PROTEINS} are bioinformatics datasets. NCI1 is collected from the National Cancer Institute (NCI) dataset. It is a balanced subset of chemical compounds screened for suppressing or inhibiting the growth of a panel of human tumor cell lines, with 37 discrete labels. MUTAG is a mutagenic aromatic and heteroaromatic nitro compounds dataset with seven discrete labels. PROTEINS is a dataset where nodes are secondary structure elements (SSEs), and there is an edge between two nodes if they are neighbors in the amino-acid sequence or 3D space. It has three discrete labels, representing helix, sheet, or turn.

\begin{table}
\caption{Experimental results on synthetic graph datasets.}
\label{tab:graph_synthetic}
\begin{threeparttable}
\begin{tabular}{ccccc}
\toprule
\multirow{2}{*}{} & \multicolumn{2}{c}{RG-Macro} & \multicolumn{2}{c}{RG-Sub} \\
 & Acc & F1\_ma & Acc & F1\_ma \\
 \midrule
Graphlet Kernel & 0.3800 & 0.3195 & 0.6020 & 0.5919 \\
WL Subtree & \textbf{0.6860} & \textbf{0.6702} & 0.9190 & 0.9218 \\
Hypergraph Rooted & 0.1790 & 0.0933 & 0.1850 & 0.0866 \\
Hypergraph Directed Line & 0.6040 & 0.5441 & 0.6820 & 0.6851 \\
\midrule
Hypergraph WL Subtree & \textbf{0.6860} & \textbf{0.6702} & 0.9190 & 0.9218 \\
Hypergraph WL Hyperedge & 0.6490 & 0.6347 & \textbf{0.9250} & \textbf{0.9268} \\
\bottomrule
\end{tabular}
    \begin{tablenotes}
        \item The best results are marked in bold type.
    \end{tablenotes}
\end{threeparttable}
\end{table}

\begin{table*}
\centering
\caption{Experimental results on real-world graph datasets.}
\label{tab:graph_real}
\begin{threeparttable}
\begin{tabular}{ccccccccccc}
\toprule
\multirow{2}{*}{} & \multicolumn{2}{c}{IMDB-BINARY} & \multicolumn{2}{c}{IMDB-MULTI} & \multicolumn{2}{c}{MUTAG} & \multicolumn{2}{c}{NCI1} & \multicolumn{2}{c}{PROTEINS} \\
 & Acc & F1\_ma & Acc & F1\_ma & Acc & F1\_ma & Acc & F1\_ma & Acc & F1\_ma \\
 \midrule
Graphlet & 0.6010 & 0.5959 & 0.3700 & 0.3402 & 0.8087 & 0.7599 & 0.6015 & 0.6003 & 0.6864 & 0.6236 \\
Graph WL Subtree & \textbf{0.7440} & \textbf{0.7245} & \textbf{0.5167} & \textbf{0.5053} & \textbf{0.8245} & \textbf{0.7822} & \textbf{0.7487} & \textbf{0.7486} & \textbf{0.7196} & \textbf{0.6950} \\
Hypergraph Rooted & 0.4950 & 0.4201 & 0.3447 & 0.3244 & 0.6649 & 0.3983 & 0.4947 & 0.4715 & 0.5957 & 0.4451 \\
Hypergraph Directed Line & 0.6370 & 0.6361 & 0.4367 & 0.4161 & 0.6650 & 0.3986 & 0.6591 & 0.6560 & 0.6738 & 0.6493 \\
\midrule
Hypergraph WL Subtree & \textbf{0.7440} & \textbf{0.7245} & \textbf{0.5167} & \textbf{0.5065} & \textbf{0.8245} & \textbf{0.7822} & \textbf{0.7487} & \textbf{0.7486} & \textbf{0.7196} & \textbf{0.6950} \\
Hypergraph WL Hyperedge & 0.7250 & 0.7232 & 0.5153 & 0.5049 & 0.8141 & 0.7738 & 0.7037 & 0.7037 & 0.7071 & 0.6760 \\
\bottomrule
\end{tabular}
    \begin{tablenotes}
        \item The best results are marked in bold type.
    \end{tablenotes}
\end{threeparttable}
\end{table*}

\subsubsection{Experimental Results and Discussions}
Experimental results on seven graph datasets are provided in Table \ref{tab:graph_synthetic} and \ref{tab:graph_real}. We have the following three observations. First, the results show that the hypergraph WL subtree performs similarly to the graph WL Subtree method with the same random seeds and fold splitting. It can be explained by the analysis in Section \ref{sec:relation_graph_wl_and_hypergraph_wl}, which indicates that the hypergraph WL kernel equals the graph WL kernel confronting the low-order graph structure. Second, the two compared hypergraph kernel methods perform worse than the graph kernel methods. This is because the two hypergraph kernel methods only focus on the high-order correlations and are ineffective for the common low-order correlations. For example, the hypergraph directed line kernel embeds the graph/hypergraph by two-step structure transformation, which will produce many redundancy low-order structures. Confronting low-order structures tasks like graph classification, those redundant structures include much noise, which may reduce the quality of the final graph/hypergraph features. Those. Third, the hypergraph WL hyperedge kernel performs passably in those graph datasets compared with the hypergraph/graph WL subtree kernel. The main reason is that the edge in graphs only connects two vertices, which restricts the modeling ability of the hyperedge kernel. In the following hypergraph experiments, we will show the significant performance improvement of the proposed two hypergraph kernel compared with existing graph and hypergraph kernel methods.

\subsection{Experiments on Hypergraph Datasets}
\label{sec:exp:hypergraph}
This subsection presents the experiments results on a collection of hypergraph datasets, including four synthetic and eight real-world hypergraph datasets. Table \ref{tab:stat_hypergraph} provides a summary of the dataset statistics..

\begin{table*}
\centering
\caption{Statistic information of ten hypergraph datasets.}
\label{tab:stat_hypergraph}
\begin{threeparttable}
\begin{tabular}{ccccccc}
\toprule
Dataset & \# hypergraphs & \# classes & Multi-label & Avg. \# vertices & Avg. \# hyperedges & Avg. hyperedge degree \\
\midrule
RHG-10 & 2000 & 10 & \XSolidBrush & 31.3 & 29.8 & 5.2 \\
RHG-3 & 1500 & 3 & \XSolidBrush & 35.5 & 17.9 & 6.9 \\
RHG-Table & 1000 & 2 & \XSolidBrush & 36.3 & 63.3 & 3.1 \\
RHG-Pyramid & 1000 & 2 & \XSolidBrush & 28.8 & 30.6 & 3.0 \\
\midrule
IMDB-Dir-Form & 1869 & 3 & \XSolidBrush & 15.7 & 39.2 & 3.7 \\
IMDB-Dir-Genre & 3393 & 3 & \XSolidBrush & 17.3 & 36.4 & 3.8 \\
IMDB-Dir-Genre-M & 1554 & 6 & \Checkmark & 15.7 & 40.8 & 3.8 \\
IMDB-Wri-Form & 374 & 4 & \XSolidBrush & 10.1 & 3.7 & 5.0 \\
IMDB-Wri-Genre & 1172 & 6 & \XSolidBrush & 12.8 & 4.4 & 5.2 \\
IMDB-Wri-Genre-M & 344 & 7 & \Checkmark & 10.3 & 3.7 & 5.0 \\
Steam-Player & 2048 & 2 & \XSolidBrush & 13.8 & 46.4 & 4.5 \\
Twitter-Friend & 1310 & 2 & \XSolidBrush & 21.6 & 84.3 & 4.3 \\
\bottomrule
\end{tabular}
    \begin{tablenotes}
        \item ``\#'' denotes ``number of'', and ``Avg.'' is short for ``Average''.
    \end{tablenotes}
\end{threeparttable}
\end{table*}

\subsubsection{Datasets}

\textit{Synthetic Hypergraph Datasets:} To evaluate the performance of isomorphism algorithms in capturing structural information, we created four sets of randomly generated hypergraph datasets: RHG-10, RHG-3, RHG-Table, and RHG-pyramid. By randomly picking constructor parameters, we created a range of diverse hypergraphs for experimentation. The RHG-10 dataset encompasses all ten distinct hypergraph structure factors. Table \ref{tab:hypergraph_syn}
showcases examples of all ten typical hypergraph structures. Among the ten manually selected structures, there are three pairs (six types) of hypergraph structures that are prone to confusion, while the remaining four types are selected from common and classical hypergraph structures. To evaluate the algorithm's ability to recognize significant high-order structures, we randomly generated 1000 hypergraphs for three distinctively various hypergraph structures: hyper-pyramid, hyper-check-table, and hyper-wheel, thus constructing the RHG-3 dataset. Moreover, to validate the fine-grained classification capability of our hypergraph isomorphism algorithm, we created two separate datasets for each pair of easily confusable hypergraph structures: RHG-Pyramid comprises one thousand hypergraphs consisting of Hyper-pyramid and Hyper-firm-pyramid structures; RHG-Table consists of data from Hyper-check-table and Hyper-rot-check-table structures.

\begin{table*}
\centering
    \caption{Examples of synthetic factor hypergraphs in RHG datasets.}
    \label{tab:RHG_example}
    \scriptsize
  \begin{tabular}{ |c|c|c|c|c| }
    \hline
        \centering
        Hyper Flower&Hyper Pyramid&Hyper Checked Table& Hyper Wheel&Hyper Lattice\\
        \hline
    \begin{minipage}[b]{0.3\columnwidth}
		\centering
		\raisebox{-.5\height}{\includegraphics[width=\linewidth]{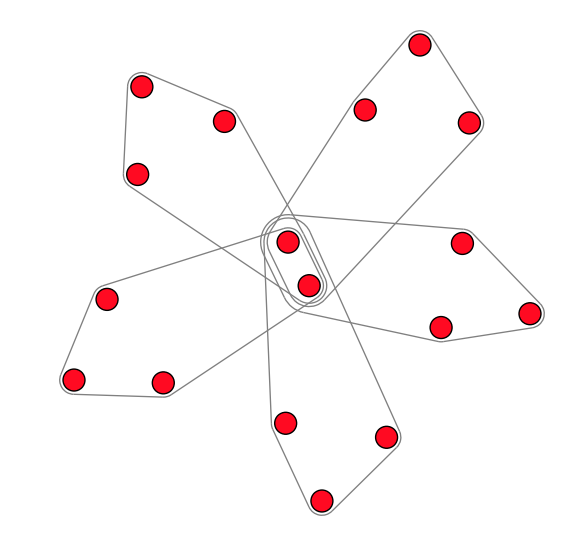}}
	\end{minipage}
    &
    \begin{minipage}[b]{0.3\columnwidth}
		\centering
		\raisebox{-.5\height}{\includegraphics[width=\linewidth]{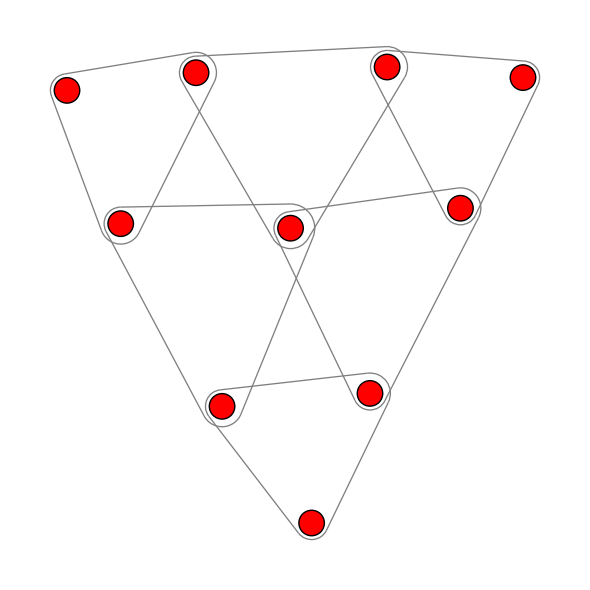}}
	\end{minipage}
    &
    \begin{minipage}[b]{0.3\columnwidth}
		\centering
		\raisebox{-.5\height}{\includegraphics[width=\linewidth]{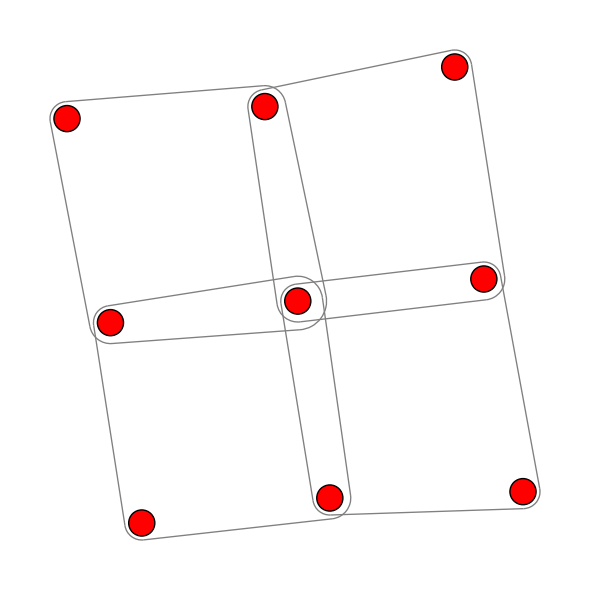}}
	\end{minipage}
    &
    \begin{minipage}[b]{0.3\columnwidth}
		\centering
		\raisebox{-.5\height}{\includegraphics[width=\linewidth]{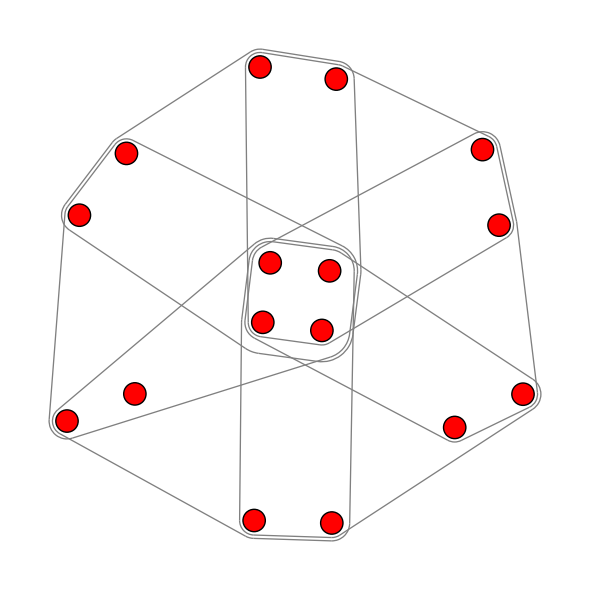}}
	\end{minipage}
    &
    \begin{minipage}[b]{0.3\columnwidth}
		\centering
		\raisebox{-.5\height}{\includegraphics[width=\linewidth]{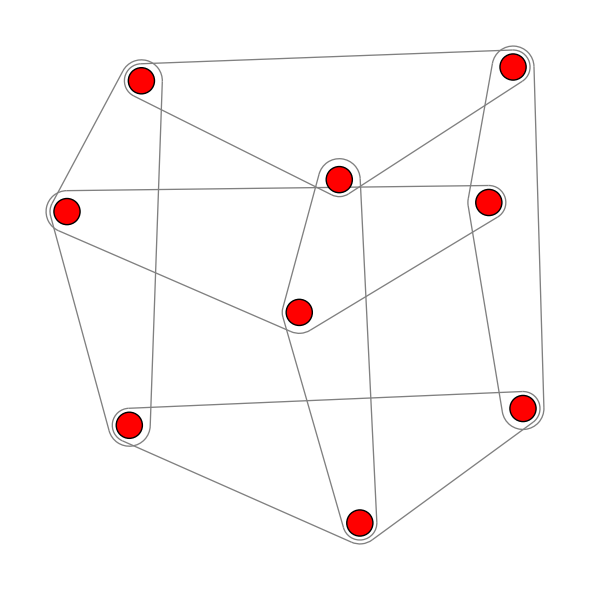}}
	\end{minipage}
    \cr
    \hline
        \centering
        Hyper Windmill&Hyper Firm Pyramid&Hyper Rot Checked Table&Hyper Cycle&Hyper Fern\\
        \hline
    \begin{minipage}[b]{0.3\columnwidth}
		\centering
		\raisebox{-.5\height}{\includegraphics[width=\linewidth]{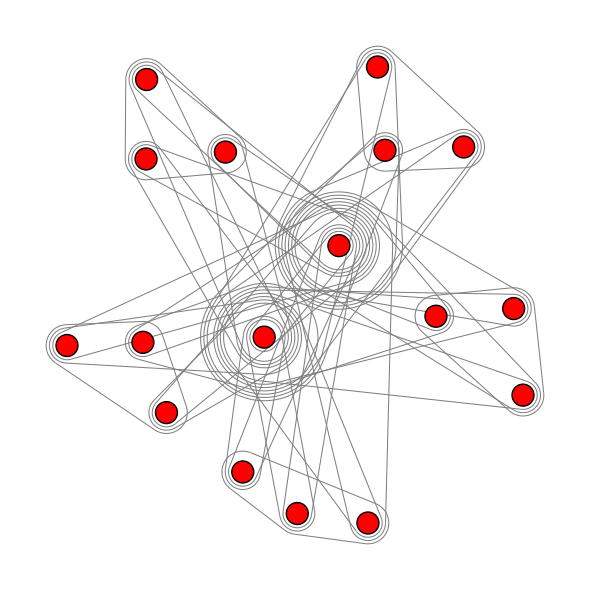}}
	\end{minipage}
    &
    \begin{minipage}[b]{0.3\columnwidth}
		\centering
		\raisebox{-.5\height}{\includegraphics[width=\linewidth]{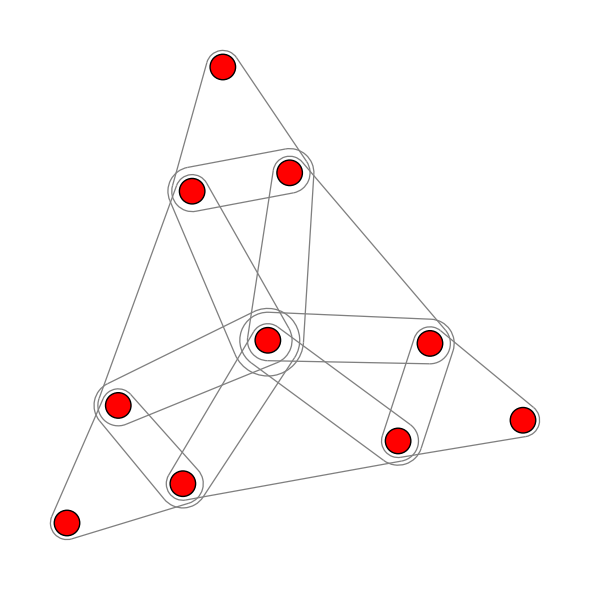}}
	\end{minipage}
    &
    \begin{minipage}[b]{0.3\columnwidth}
		\centering
		\raisebox{-.5\height}{\includegraphics[width=\linewidth]{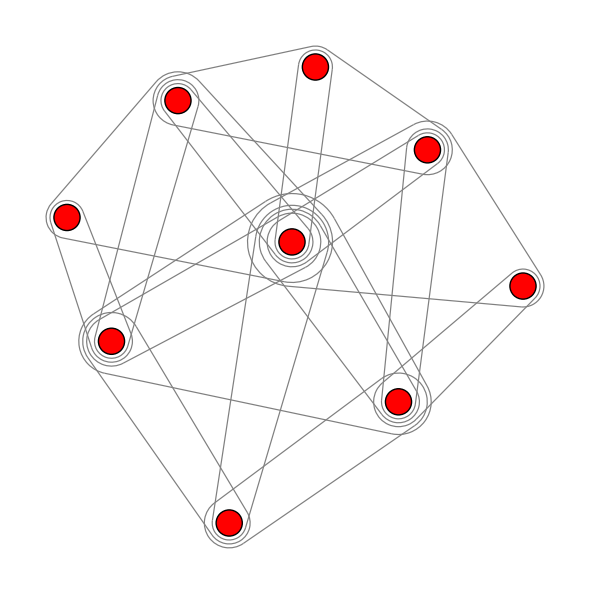}}
	\end{minipage}
    &
    \begin{minipage}[b]{0.3\columnwidth}
		\centering
		\raisebox{-.5\height}{\includegraphics[width=\linewidth]{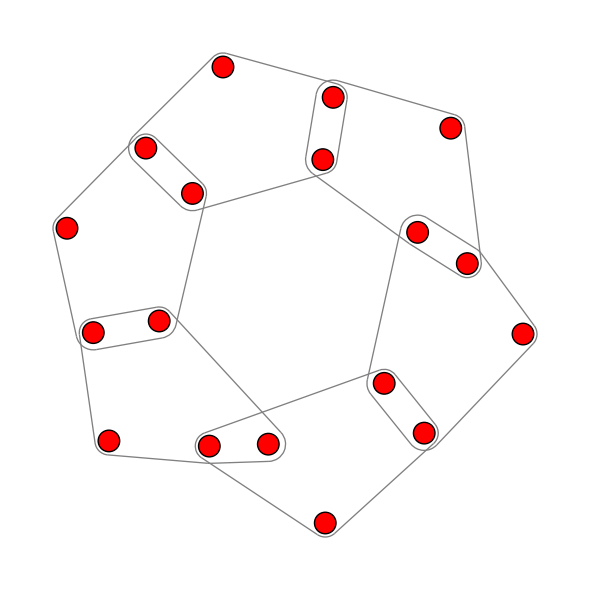}}
	\end{minipage}
    &
    \begin{minipage}[b]{0.3\columnwidth}
		\centering
		\raisebox{-.5\height}{\includegraphics[width=\linewidth]{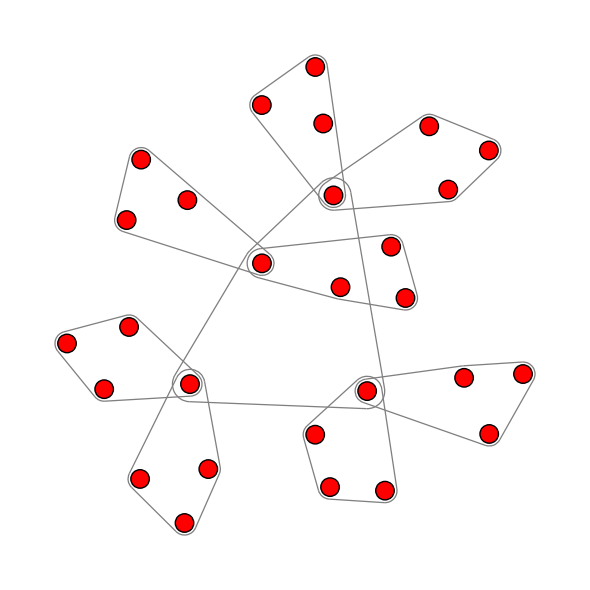}}
	\end{minipage}
    \\ \hline
  \end{tabular}
\end{table*}

\textit{Real-world Hypergraph Datasets:} In Table \ref{tab:stat_hypergraph}, those datasets that starts with ``IMDB'' is constructed from the original IMDB dataset \footnote{\href{https://developer.imdb.com/non-commercial-datasets/}{https://www.imdb.com/}}. Here, two types of correlations: co-director and co-writer, are adopted for high-order dataset construction. In the dataset name, the ``Dir'' and ``Wri'' denotes that the hypergraph is constructed by the co-director and co-writer relationship, respectively. The staff (director/writer) of each movie is a hypergraph. ``Form'' is included in the dataset's name, indicating that the movie category is identified by its form, like animation, documentary, and drama. ``Genre'' denotes the movie is classified by its genres, like adventure, crime, and family. Suffix ``-M'' in the dataset name denotes that each movie is associated with multiple genre labels, which are collected for the more complex multi-label classification task. The Steam-Player dataset \footnote{\href{https://www.kaggle.com/datasets/antonkozyriev/game-recommendations-on-steam}{https://store.steampowered.com/}} is a player dataset where each player is a hypergraph. The vertex is the games played by the player, and the hyperedge is constructed by linking the games with shared tags. The target of the dataset is to identify each user's preference: single-player-game or multi-player-game. The Twitter-Friend \footnote{\href{https://www.kaggle.com/datasets/hwassner/TwitterFriends}{https://twitter.com/}} dataset is a social media dataset. Each hypergraph is the friends of a specified user. The hyperedge is constructed by linking the users who are friends. The label associated with the hypergraph is to identify whether the user posted the blog about ``national dog day'' or ``respect tyler joseph''.

\subsubsection{Results on Synthetic Hypergraphs and Discussions}
Experimental results on four synthetic hypergraph datasets are provided in Table \ref{tab:hypergraph_syn}. Based on the obtained results, distinct observations were made for each dataset. First, the results obtained from the RHG-10 datasets demonstrated significant improvements in our isomorphism algorithms compared to others. Specifically, there was a $ 25.1\% $ accuracy improvement on Hypergraph WL Subtree and a $ 20.3\% $ accuracy improvement on Hypergraph WL Subtree. Second, the isomorphism results provided by Graphlet exhibited considerably lower accuracy than other algorithms. This discrepancy can be attributed to the heavy reliance of Graphlet on random sampling, which compromises the stability of predictions.

\begin{figure}[!htbp]
\centering
    \includegraphics[width=0.45\textwidth]{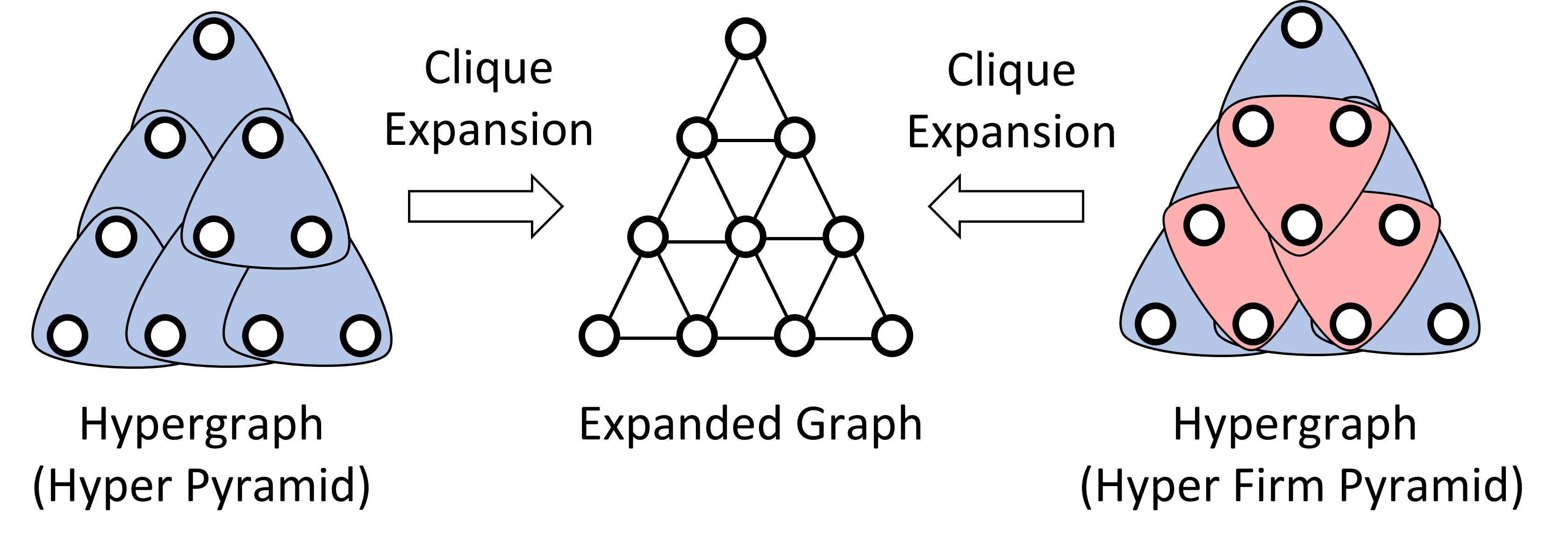}
    \caption{Illustration of the difference between Hyper Firm Pyramid and Hyper Pyramid structures. The difference is marked in red color.}
    \label{fig:pyramid}
\end{figure}

The performance disparity between hypergraph isomorphism and graph isomorphism was evident in the results obtained from the RHG-Table dataset. The need to transform the hypergraph structure into a graph structure through clique expansion when employing the graph isomorphism algorithm revealed that graph structures are less capable of representing complex structures than hypergraphs. Similarly, in the case of RHG-Table and RHG-Pyramid datasets, which consisted of two intricate hypergraph structures, the Hypergraph Rooted algorithm surprisingly exhibited significantly lower recognition ability than other hypergraph isomorphism algorithms. Resembling the performance of the two graph isomorphism algorithms, the Hypergraph Rooted algorithm's accuracy of approximately $ 50\% $ indicated its inability to resolve Hyper-firm-check-table and Hyper-check-table structures. This inferior performance of the Hypergraph Rooted algorithm can be attributed to its fundamental nature as a random walk sampling method, which leads to random traversal when encountering overlapping parts of hyperedges. Consequently, the algorithm lacks a robust judgment of overlapping node groups, common in Hyper Firm Pyramid structures as Figure \ref{fig:pyramid} shows. However, the HG Directed Line algorithm does not suffer from this limitation. The overlapping edges are appropriately reinforced by converting it into a bipartite graph, thereby overcoming the aforementioned issue. Our two algorithms go beyond conventional approaches. They represent a significant advancement by substantially reducing the time needed for extracting structural features from special structures by eliminating the hypergraph-to-graph conversion step and demonstrating a robust recognition ability, particularly in scenarios involving prevalent overlapping hyperedges.

\subsubsection{Results on Real-World Hypergraphs and Discussions}
The Experimental results for eight real-world hypergraph datasets are presented in Tables \ref{tab:hypergraph_real} and \ref{tab:hypergraph_real_m}. Based on those results, we have three observations. Firstly, our proposed two hypergraph WL kernels outperform other compared methods across all real-world hypergraph datasets. This superiority stems from our methods' ability to effectively capture and identify various high-order structures through the two-stage hypergraph WL algorithm. Secondly, we observed that the proposed "HG WL Hyperedge" method outperforms the proposed "HG WL Subtree" method in three IMDB-Wri dataset: IMDB-Wri-Form, IMDB-Wri-Genre, and IMDB-Wri-Genre-M. With further analysis, in Table \ref{tab:stat_hypergraph}, we find that these three hypergraph datasets have lower average numbers of hyperedge ($<= 5.0$) and higher average hyperedge degree ($>=5.0$) compared with other hypergraph datasets. The results indicate that, in comparison to the "HG WL Subtree" method, the "HG WL Hyperedge" method is capable of identifying more complex hyperedges characterized by connecting a greater number of vertices. Thus, the "HG WL Hyperedge" performs better in the three hypergraph datasets. Thirdly, in terms of the compared methods, in most datasets, we observed that, in most datasets, the HG Directed Line method outperforms the Graph WL Subtree method, while the HG Rooted method outperforms the Graphlet method. It is important to note that, to ensure a fair comparison, the Graphlet and HG Rooted methods are sampling-based methods, whereas the other two methods are not. It is evident that the sampling-based methods yield lower performance compared to the other two methods due to the inherent information loss associated with sampling. Within each type of method, the hypergraph kernel approach is able to directly capture and process high-order correlations present in hypergraphs, leading to superior performance.

\begin{table*}
\centering
\caption{Experimental results of synthetic hypergraph datasets.}
\label{tab:hypergraph_syn}
\begin{threeparttable}
\begin{tabular}{ccccccccc}
\toprule
\multirow{2}{*}{} & \multicolumn{2}{c}{RHG-10} & \multicolumn{2}{c}{RHG-3} & \multicolumn{2}{c}{RHG-Table} & \multicolumn{2}{c}{RHG-Pyramid} \\
 & Acc & F1\_ma & Acc & F1\_ma & Acc & F1\_ma & Acc & F1\_ma \\
 \midrule
Graphlet & 0.3925 & 0.3467 & 0.7033 & 0.7027 & 0.4920 & 0.4627 & 0.4700 & 0.4509 \\
Graph WL Subtree & 0.6720 & 0.6483 & 0.9986 & 0.9986 & 0.5070 & 0.4928 & 0.4940 & 0.4737 \\
Hypergraph Rooted & 0.5545 & 0.5118 & 0.9986 & 0.9986 & \textbf{1.0000} & \textbf{1.0000} & 0.5010 & 0.4686 \\
Hypergraph Directed Line & 0.7040 & 0.6743 & 0.9986 & 0.9986 & \textbf{1.0000} & \textbf{1.0000} & 0.9410 & 0.9406 \\
Hypergraph WL Subtree & \textbf{0.9610} & \textbf{0.9606} & \textbf{0.9993} & \textbf{0.9993} & \textbf{1.0000} & \textbf{1.0000} & \textbf{0.9540} & \textbf{0.9540} \\
Hypergraph WL Hyperedge & 0.9030 & 0.9024 & \textbf{0.9993} & \textbf{0.9993} & \textbf{1.0000} & \textbf{1.0000} & \textbf{0.9540} & \textbf{0.9540} \\
\bottomrule
\end{tabular}
    \begin{tablenotes}
        \item ``HG'' is short for ``Hypergraph''. The best results are marked in bold type.
    \end{tablenotes}
\end{threeparttable}
\end{table*}

\begin{table*}
\centering
\caption{Experimental results of single-label classification on real-world hypergraph datasets.}
\footnotesize
\label{tab:hypergraph_real}
\begin{threeparttable}
\begin{tabular}{ccccccccccccc}
\toprule
\multirow{2}{*}{} & \multicolumn{2}{c}{IMDB-Dir-Form} & \multicolumn{2}{c}{IMDB-Dir-Genre} & \multicolumn{2}{c}{IMDB-Wri-Form} & \multicolumn{2}{c}{IMDB-Wri-Genre} & \multicolumn{2}{c}{Steam-Player} & \multicolumn{2}{c}{Twitter-Friend} \\
 & Acc & F1\_ma & Acc & F1\_ma & Acc & F1\_ma & Acc & F1\_ma & Acc & F1\_ma & Acc & F1\_ma \\
 \midrule
Graphlet & 0.5265 & 0.3922 & 0.6086 & 0.5058 & 0.4654 & 0.2216 & 0.3353 & 0.2318 & 0.4805 & 0.3440 & 0.5878 & 0.3769 \\
Graph WL Subtree & {0.6453} & {0.5967} & 0.7424 & 0.6619 & 0.4681 & 0.2994 & 0.4846 & 0.4202 & 0.5293 & 0.5189 & 0.5931 & 0.4467 \\
HG Rooted & 0.5944 & 0.5247 & 0.6891 & 0.6095 & 0.4438 & 0.2471 & 0.4215 & 0.3755 & 0.4868 & 0.4855 & 0.5817 & 0.4635 \\
HG Directed Line & 0.6670 & 0.6199 & OOM & OOM & 0.4574 & 0.2604 & 0.4999 & 0.4269 & 0.5312 & 0.5206 & OOM & OOM \\
\midrule
HG WL Subtree & \textbf{0.6741} & \textbf{0.6227} & \textbf{0.7804} & \textbf{0.7379} & 0.4679 & \textbf{0.3335} & 0.5307 & 0.4738 & \textbf{0.5620} & \textbf{0.5474} & \textbf{0.5954} & \textbf{0.4800} \\
HG WL Hyperedge & 0.6688 & 0.6197 & {0.7698} & {0.7165} & \textbf{0.4732} & 0.2990 & \textbf{0.5614} & \textbf{0.4953} & 0.5493 & 0.5467 & 0.5809 & 0.3971 \\
\bottomrule
\end{tabular}
    \begin{tablenotes}
        \item ``HG'' is short for ``Hypergraph'', and ``OOM'' denotes ``out of memory''. The best results are marked in bold type.
    \end{tablenotes}
\end{threeparttable}
\end{table*}

\begin{table}
\caption{Experimental results of multi-label classification on real-world hypergraph datasets.}
\scriptsize
\centering
\label{tab:hypergraph_real_m}
\begin{threeparttable}
\begin{tabular}{cp{0.6cm}<{\centering}p{0.6cm}<{\centering}p{0.6cm}<{\centering}p{0.6cm}<{\centering}p{0.6cm}<{\centering}p{0.6cm}<{\centering}}
\toprule
\multirow{2}{*}{} & \multicolumn{3}{c}{IMDB-Dir-Genre-M} & \multicolumn{3}{c}{IMDB-Wri-Genre-M} \\
 & EMR & EB\_acc & EB\_pre & EMR & EB\_acc & EB\_pre \\
 \midrule
Graphlet & 0.1795 & 0.4008 & 0.5663 & 0.0524 & 0.3306 & 0.5740 \\
Graph WL Subtree & 0.3533 & 0.4823 & 0.5954 & 0.2122 & 0.4327 & 0.6199 \\
HG Rooted & 0.2799 & 0.4159 & 0.5357 & 0.1626 & 0.3789 & 0.5762 \\
HG Directed Line & 0.3410 & 0.4779 & 0.6075 & 0.1684 & 0.4249 & 0.6131 \\
\midrule
HG WL Subtree & \textbf{0.3739} & \textbf{0.5199} & \textbf{0.6500} & 0.2238 & 0.4463 & \textbf{0.6620} \\
HG WL Hyperedge & 0.3539 & 0.5014 & 0.6210 & \textbf{0.2616} & \textbf{0.4668} & 0.6586\\
\bottomrule
\end{tabular}
    \begin{tablenotes}
        \item ``HG'' is short for ``Hypergraph''. The best results are marked in bold type.
    \end{tablenotes}
\end{threeparttable}
\end{table}

\subsection{Runtime Comparison}

\begin{figure*}[!htbp]
\centering
    \subfigure[]{
    	\begin{minipage}[t]{0.22\textwidth}
    		\centering
    		\includegraphics[width=\textwidth]{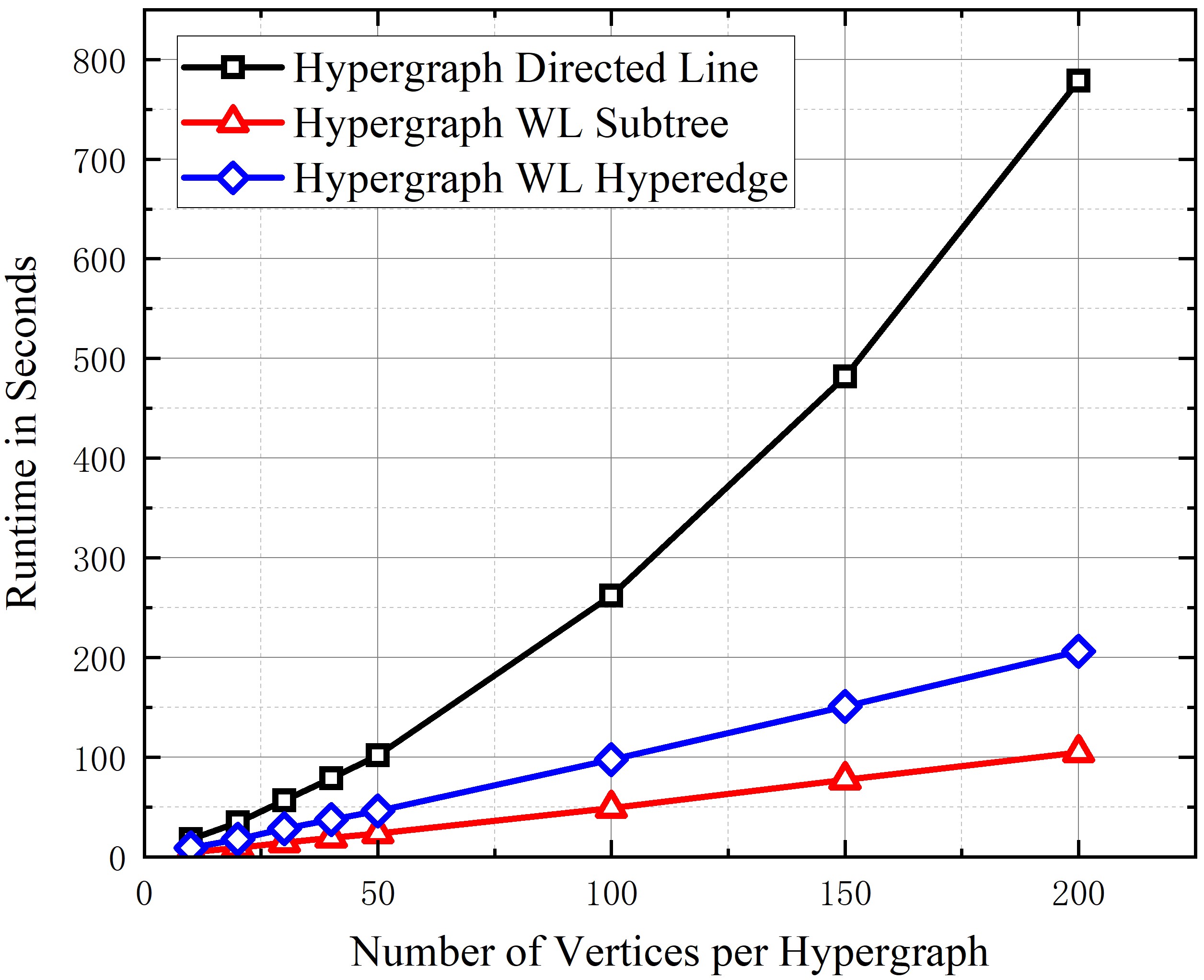}
    		\label{fig:rt_nv}
    	\end{minipage}
	}
    \subfigure[]{
    	\begin{minipage}[t]{0.22\textwidth}
    		\centering
    		\includegraphics[width=\textwidth]{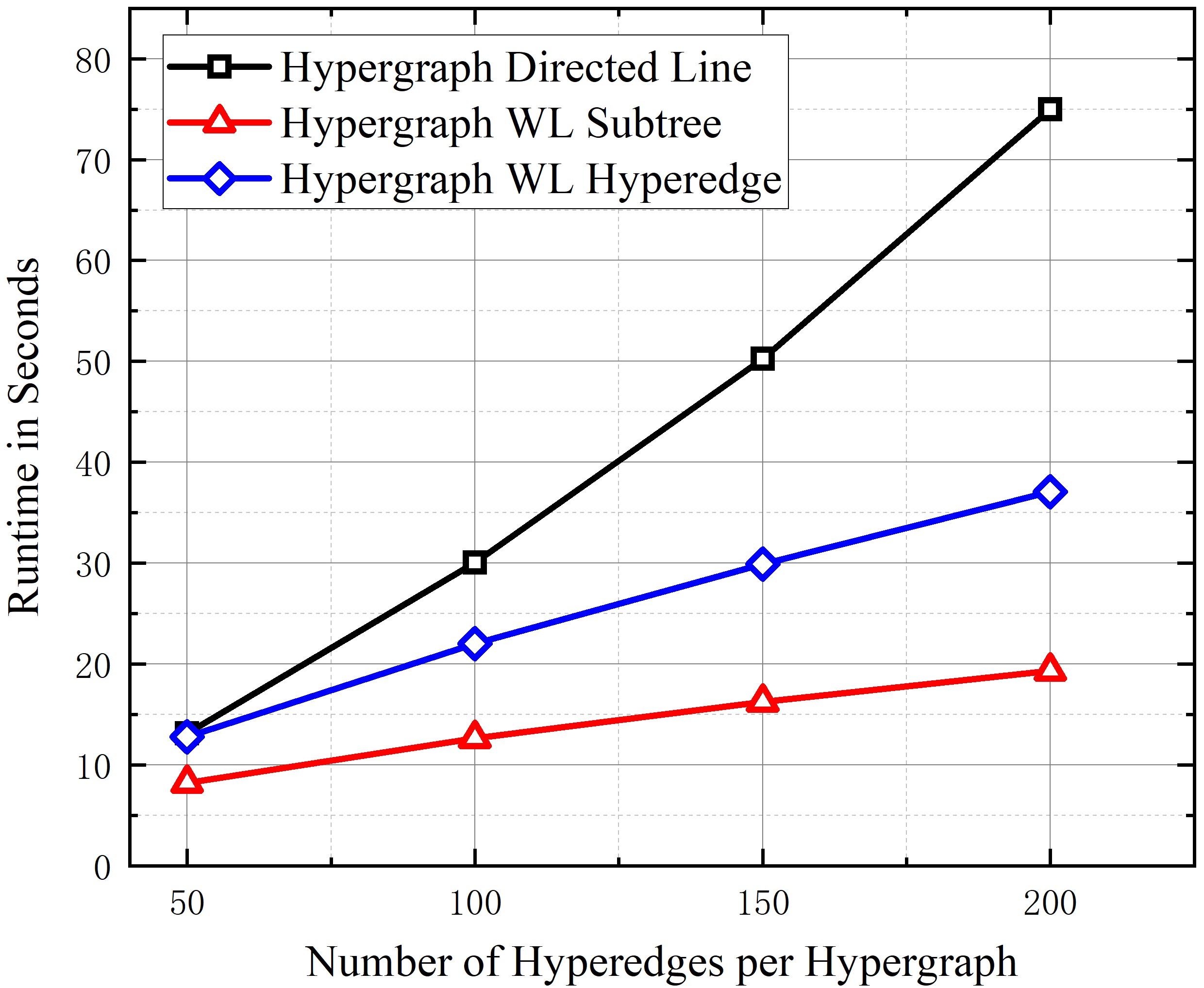}
    		\label{fig:rt_ne}
    	\end{minipage}
	}
    \subfigure[]{
    	\begin{minipage}[t]{0.22\textwidth}
    		\centering
    		\includegraphics[width=\textwidth]{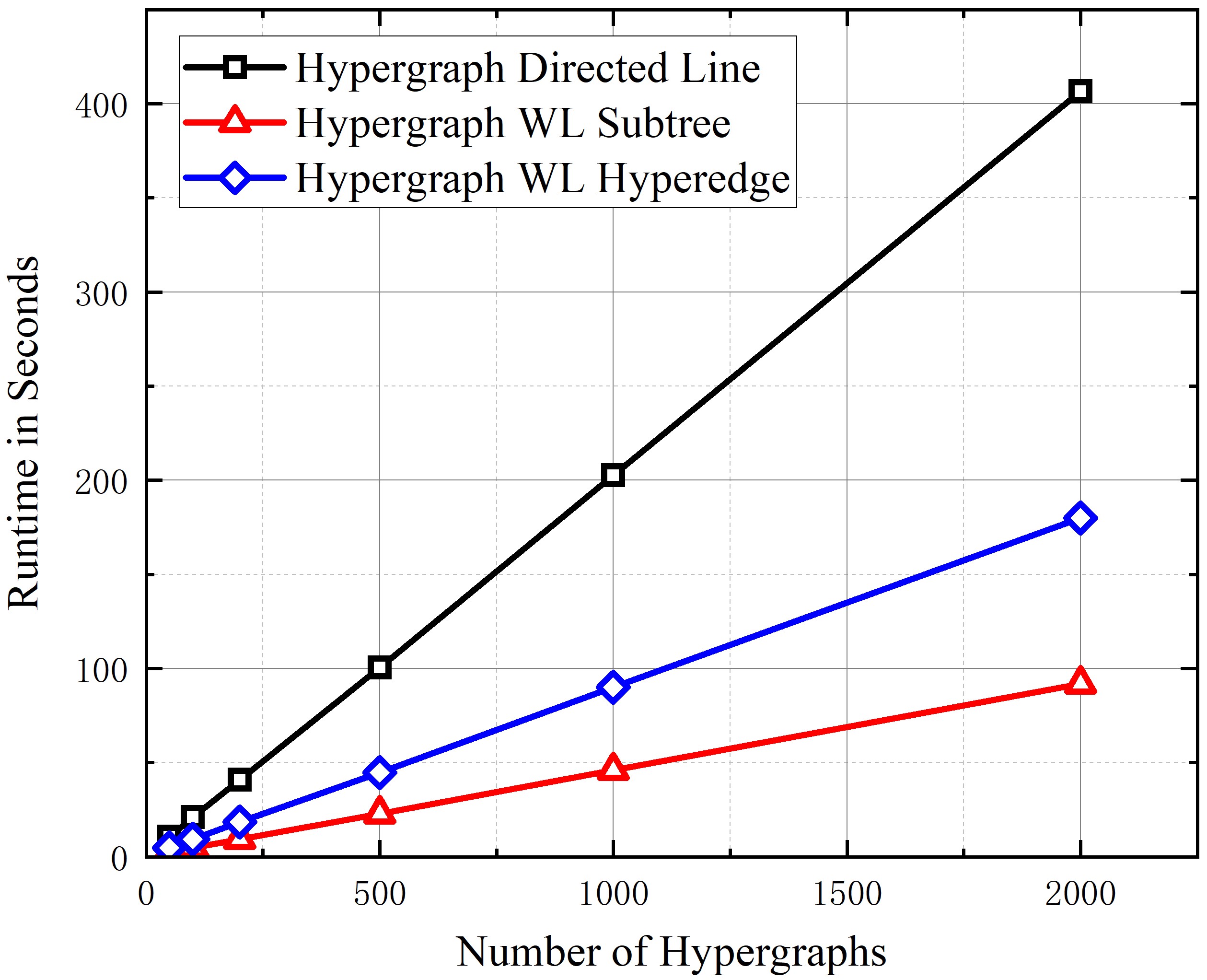}
    		\label{fig:rt_hg}
    	\end{minipage}
	}
    \subfigure[]{
    	\begin{minipage}[t]{0.22\textwidth}
    		\centering
    		\includegraphics[width=\textwidth]{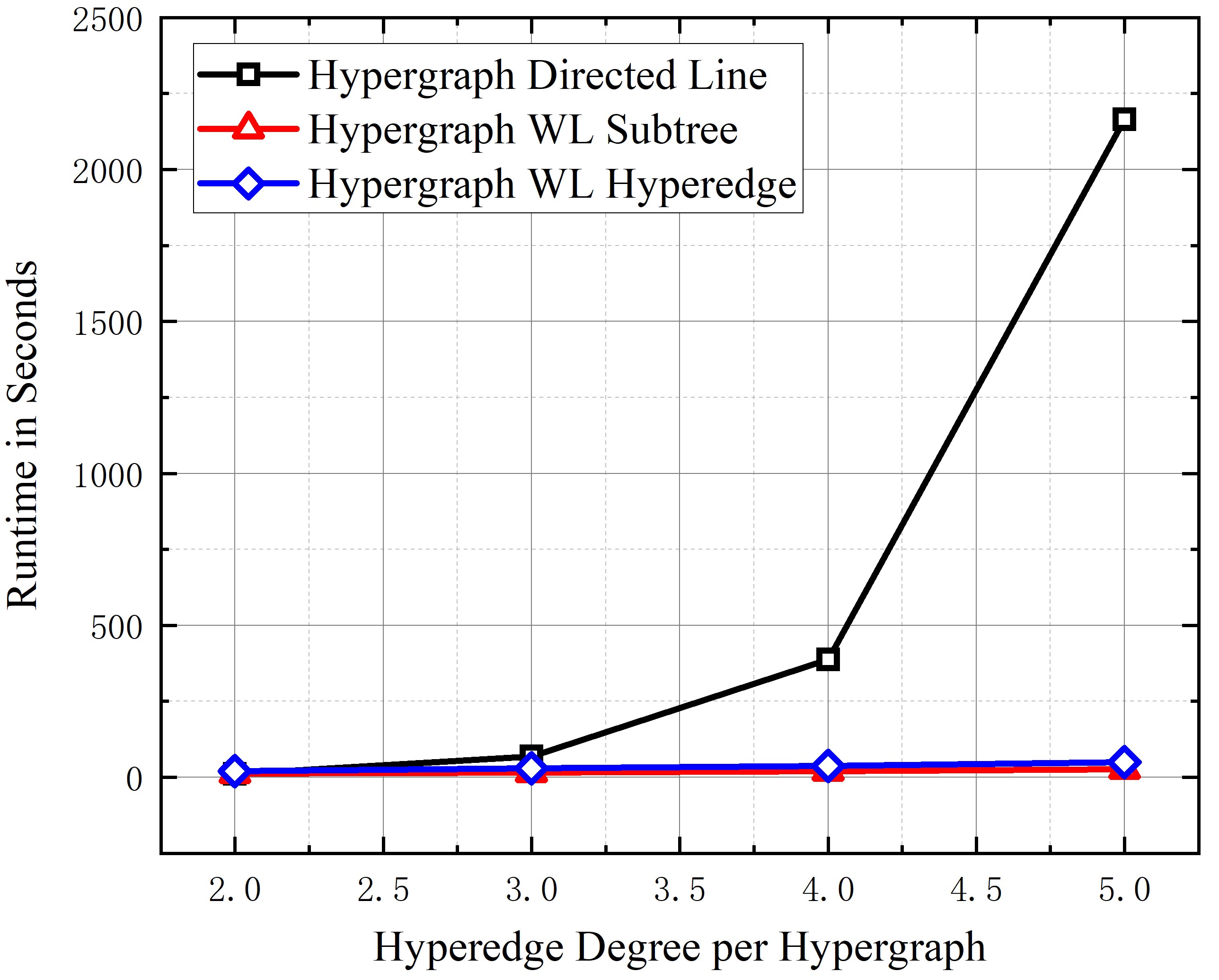}
    		\label{fig:rt_de}
    	\end{minipage}
	}
 
	\caption{
	\label{fig:rt_cmp}
	Runtime comparison for kernel matrix computation on synthetic hypergraphs.}
\end{figure*}





In this subsection, we perform experiments to evaluate the runtime performance of different methods, as shown in Figure \ref{fig:rt_cmp}. For a fair comparison, we selected the Hypergraph Directed Line method as the benchmark since it is a hypergraph kernel-based method that does not employ sampling strategies. The DeepHyperGraph\footnote{\url{https://github.com/iMoonLab/DeepHypergraph}} library is utilized to generate the synthetic hypergraphs. Those methods are computed on the same machine with Intel i7-10700 @ 2.90GHz$\times$16 CPU and 16G Memory. By default, $500$ hypergraphs with the ``low-order first'' configuration \footnote{\href{https://deephypergraph.readthedocs.io/en/latest/api/random.html\#dhg.random.hypergraph\_Gnm}{hypergraph\_Gnm()}} are randomly generated, and each hypergraph contains $50$ vertices and $250$ hyperedges. For complete runtime comparison, we devise four types of settings of synthetic hypergraphs. The first synthetic hypergraph setting is developed to test the runtime of methods as the number of vertices per hypergraph grows. In this setting, the number of hypergraphs is fixed at $500$. In each hypergraph, the number of hyperedges is five times the number of vertices. The Experimental results for this setting are presented in Figure \ref{fig:rt_nv}. The second synthetic hypergraph setting is developed to test the runtime of methods as the number of hyperedges per hypergraph grows. In this setting, the number of vertices is fixed as $50$. The number of hyperedges varies in the range $[50, 100, 150, 200]$. Experimental results of this setting are shown in Figure \ref{fig:rt_ne}. The third synthetic hypergraph setting is designed to assess the runtime performance of methods as the number of hypergraphs rises. In this setting, the number of vertices and hyperedge is fixed as $50$ and $250$, respectively. The number of hypergraphs varies in the range $[50, 100, 200, 500, 1000, 2000]$. Experimental results of this setting are shown in Figure \ref{fig:rt_hg}. The last synthetic hypergraph setting is developed to test the runtime of methods as the complexity of the hypergraph rises. Compared with an edge in simple graphs, the hyperedge can connect more than two vertices. The complexity of hypergraphs will rise as the degree of hyperedge rises. In this setting, the number of vertices and hyperedges is also fixed as $50$ and $250$, respectively. However, the degree of hyperedges varies. In other words, we randomly generate $500$ $2$-uniform hypergraph, $3$-uniform hypergraphs, $4$-uniform hypergraphs, and $5$-uniform hypergraphs for runtime comparison. Experimental results of this setting are shown in Figure \ref{fig:rt_de}. From the four sets of experimental results, we observed that the proposed two methods run significantly faster ($86\times$) than the compared Hypergraph Directed Line methods. Especially confronting more complex hypergraph datasets (the degree of hyperedge rises), the proposed methods still show robust runtime as shown in Figure \ref{fig:rt_de}. This is because the Hypergraph Directed Line method can not directly process the hypergraph, which transforms hypergraphs into undirected graphs with clique expansion and further generate directed line graphs for kernel feature computation. As the degree of hyperedge rises, the scale of the clique-expanded graphs will sharply increase. In contrast, our proposed methods could be applied in hypergraphs without extra transformation, which is not sensitive to the variation of hyperedge's degree. Besides, we find that the runtime of the proposed Hypergraph WL Hyperedge method is slower than the proposed Hypergraph WL Subtree method. The main reason is that in each iteration, the Hypergraph WL Subtree method directly counts the vertex label for the final feature vector. However, the Hypergraph WL Hyperedge method needs to build hyperedge label from those vertex labels, which will consume extra time for computation.

\section{Conclusion}

In this paper, we first present a generalization of the Weisfeiler-Lehman test algorithm from graphs to hypergraphs. To handle the complex neighbor relationships inherent in hypergraphs, we introduce a two-stage strategy that considers the vertex neighbors of a hyperedge and the hyperedge neighbors of a vertex. Based on this approach, we propose the Hypergraph Weisfeiler-Lehman test algorithm as a solution to the hypergraph isomorphism problem.
Secondly, we propose a general hypergraph Weisfeiler-Lehman kernel framework and implement two instances: Hypergraph Weisfeiler-Lehman Subtree Kernel and Hypergraph Weisfiler-Lehman Hyperedge Kernel. We also provide theoretical evidence that, confronting the graph structure, when applied to graph structures, the proposed Hypergraph Weisfeiler-Lehman Kernel exhibits the same power as the conventional Graph Weisfeiler-Lehman Kernel. 
Lastly, we conduct experiments on seven graph classification datasets and twelve hypergraph classification datasets. The results showcase significant performance improvements, affirming the effectiveness of the proposed methods.




\ifCLASSOPTIONcaptionsoff
  \newpage
\fi

\end{document}